\theoremstyle{definition}
\newtheorem{hypo}{\textsf{\textbf{{Hypothesis}}}}
\newtheorem{defn}{\textsf{\textbf{{Definition}}}}
\newtheorem{prop}{\textsf{\textbf{{Proposition}}}}
\DeclareFontFamily{OT1}{pzc}{}
\DeclareFontShape{OT1}{pzc}{m}{it}{<-> s * [1.10] pzcmi7t}{}
\DeclareMathAlphabet{\mathpzc}{OT1}{pzc}{m}{it}
\newcommand{\ii}{\text{\textsf{\textbf{i}}}}
\newcommand{\blue}[1]{\textcolor{blue}{#1}}
\newcommand{\nquad}{\!\! \!\! \!\!}
\newcommand{\nqquad}{\nquad\nquad}
\begin{document}

\title{Shape Complementarity Analysis for Objects of Arbitrary Shape\protect\footnotemark}

\author{Morad Behandish}{}
\author{Horea T. Ilie\c{s}}{}

\affiliation{}{\\ Departments of Mechanical Engineering and Computer Science and Engineering,
University of Connecticut, CT 06269}

\affiliation{}{\\ Technical Report No. CDL-TR-14-01, January 2014}{}


\maketitle

\footnotetext{This technical report was never published as a journal article. For citation, please use:
    \protect\\
    \protect\\
        \blue{Behandish, Morad and Ilie\c{s}, Horea T., 2014. ``Shape Complementarity Analysis for Objects of Arbitrary Shape.'' Technical Report No. CDL-TR-14-01, University of Connecticut.}
    \protect\\
    \protect\\
    A short presentation of this work will appear in the proceedings of the FWCG'2014 workshop \cite{Behandish2014b}.
    }

\begin{abstract}
    The basic problem of shape complementarity analysis appears fundamental to applications as diverse as mechanical design, assembly automation, robot motion planning, micro- and nano-fabrication, protein-ligand binding, and rational drug design. However, the current challenge lies in the lack of a general mathematical formulation that applies to objects of arbitrary shape. We propose that a measure of shape complementarity can be obtained from the extent of approximate overlap between shape skeletons. A space-continuous implicit generalization of the skeleton, called the skeletal density function (SDF) is defined over the Euclidean space that contains the individual assembly partners. The SDF shape descriptors capture the essential features that are relevant to proper contact alignment, and are considerably more robust than the conventional explicit skeletal representations. We express the shape complementarity score as a convolution of the individual SDFs. The problem then breaks down to a global optimization of the score over the configuration space of spatial relations, which can be efficiently implemented using fast Fourier transforms (FFTs) on nonequispaced samples. We demonstrate the effectiveness of the scoring approach for several examples from 2D peg-in-hole alignment to more complex 3D examples in mechanical assembly and protein docking. We show that the proposed method is reliable, inherently robust against small perturbations, and effective in steering gradient-based optimization.

    {\bf Keywords:} shape complementarity, shape descriptors, shape skeletons, skeletal density, cross-correlation, convolution, Fourier transform, automated assembly, protein docking.
\end{abstract}


\section{Motivation} \label{sec_motivation}

Research in computational shape analysis has impacted a wide range of disciplines from engineering to medicine, archeology, art, and entertainment. A quantitative description of shape complementarity, in particular, is critical to numerous applications, including:
\begin{itemize}
    \item CAD/CAM, design and assembly automation, from industrial scale to micro- and nano-structures;
    \item robotics, motion and path planning, obstacle avoidance, direct and inverse configuration space problems;
    \item systematic garment design systems, optimizing piecewise design and fabrication of apparel products;
    \item modeling biological processes involving molecular recognition, from gene expression to protein-ligand binding and drug design.
\end{itemize}
More generally, in almost every problem where a physical system is modeled by an arrangement of interacting pointsets, complementary geometrical features are indisputably critical determinants of proper assembly and function. However, in spite of its ubiquity and versatility, obtaining a general measure of shape complementarity for objects of arbitrary shape remains an open problem.

\begin{figure}
    \centering
    \includegraphics[width=0.48\textwidth]{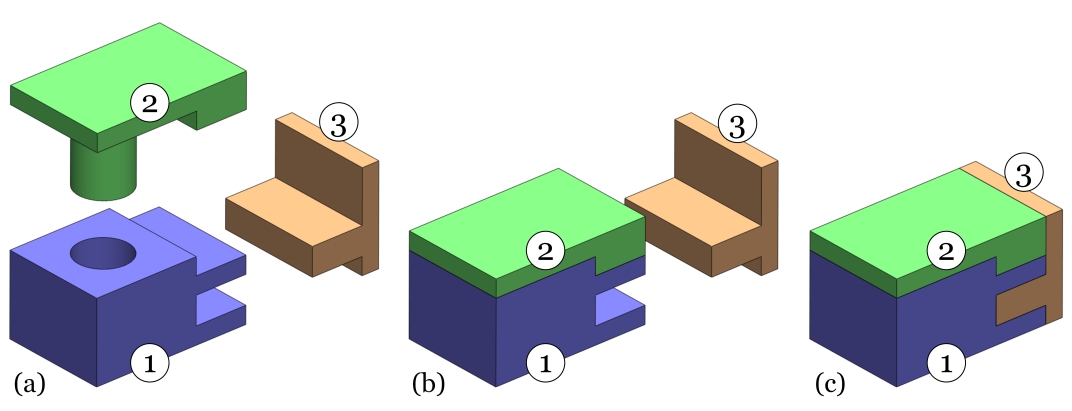}
    \caption{Automatic identification of spatial relations and assembly sequence is governed by shape complementarity - figure reproduced from \cite{Latombe1991} with minor modifications.} \label{figure1}
\end{figure}

A particularly important application that can benefit from such an analysis is assembly automation. Figure \ref{figure1} shows the assembly sequence of 3 simple mechanical components. An automatic identification of proper mating features without user interference turns out to be quite challenging, which reduces to an optimization of geometric complementarity. Once determined, the corresponding sequence of spatial relations between the properly assembled parts can be transferred to a description of robot motions for automatic assembly \cite{Latombe1991}.

Among other macro-scale applications are feature recognition for manufacturing \cite{Han2000}, motion planning in crowded or narrow environments \cite{Latombe1991}, fixture design \cite{Wu1998}, metal stamping die design \cite{Cheok1998}, and design of garment products \cite{Wang2005a}. However, shape complementarity is also central to the fabrication of micro-assemblies, including automated assembly of micro-mirrors \cite{Reid1998} and optical manipulation of micro-puzzles \cite{Rodrigo2007}. The impacts are realized even further in nano-fabrication where self-assembly is more critical. More recent studies on the placement of DNA origami structures \cite{Woo2011}, and other DNA modular assemblies composed of tetrahedra \cite{Goodman2008}, nano-tubes \cite{Aldaye2009}, and tensegrities \cite{Liedl2010}, further pronounce the demands for a fundamental study of the subject.

\begin{figure}
    \centering
    \includegraphics[width=0.48\textwidth]{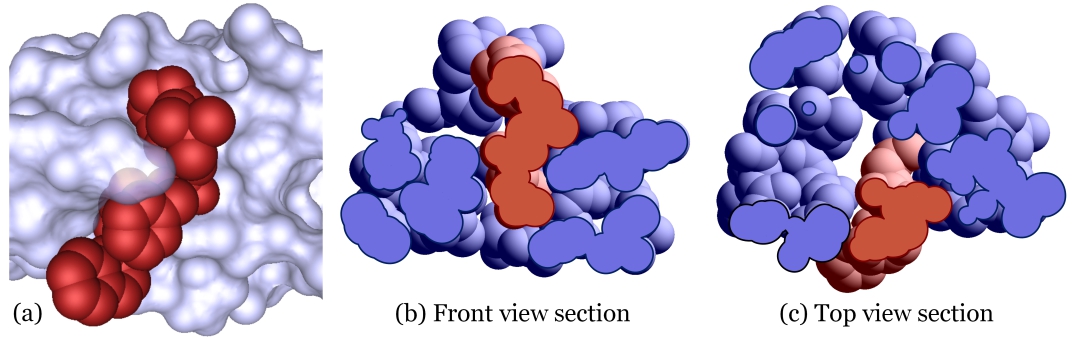}
    \caption{The binding site of HIV protease in complex with Saquinavir (PDB Code: 1FB7) \cite{Hong2000}; the shape complementarity is visually obvious, yet difficult to quantify.} \label{figure2}
\end{figure}

The significance of shape complementarity has been known to the molecular biologists since the earliest days of protein structure determination in 1950s \cite{Kendrew1958}. Proteins are the cellular machinery of living organisms, responsible for a vast array of biological functions, exclusively determined by their 3D structures. Geometric alignment has been identified as a vital player in different levels of structural studies; for instance, in folding of $\alpha-$helical domains (e.g., packing models such as `knobs in holes' and `ridges in grooves' \cite{Eilers2002}), and in complex formations (e.g., binding models such as `lock and key' or `induced fit' \cite{Koshland1995}). However, most of these models have been restricted to qualitative interpretations by the biologists.
One particularly significant application of binding prediction is in rational drug design, where geometric fit into the receptor protein is the main criterion in designing the scaffold of the `lead compounds' for drug molecules \cite{Kuriyan2012}.
Access to visualization software has facilitated this process for the pharmacologist, in a similar fashion that CAD technology has revolutionized the engineer's toolbox. However, automatic identification of a proper fit is even more helpful when dealing with molecular interfaces, noting that mating features are geometrically more complex and less clear to the human eye. Figure \ref{figure2} shows the drug molecule Saquinavir prescribed in HIV therapy, in complex with a subset of its receptor enzyme, HIV protease (PDB Code: 1FB7) \cite{Hong2000}.


\section{Related Work} \label{sec_relatedwork}

Most of the research in shape complementarity analysis is indebted to computational attempts in {\it ab initio} protein docking, extensively studied among computer scientists and computational biologists \cite{Ritchie2008a}.
The earlier techniques ranged from geometric hashing \cite{Lenhof1997,Duhovny2002}, to detection and matching of `knobs and holes' \cite{Wang1991,Shoichet1991,Lawrence1993}, and other signature features \cite{Jiang1991,Fischer1993,Duhovny2002} on the surfaces of the binding proteins. More recent methods characterize the topography of cavities and protrusions of the surface by means of a so-called `elevation function' \cite{Agarwal2006}, and compare the maxima of this function for geometric alignment of protein surfaces \cite{Wang2005}. A large sub-class of the most effective docking algorithms rely on correlations computed from overlapping geometric densities, reviewed in \cite{Eisenstein2004}. The earliest attempts rasterized the geometry to a uniform 3D grid, assigning different values to the cells based on their position with respect to the boundary. The grid representations of the mating partners were overlapped to obtain a score for a given relative ``pose'', and used simple methods for sampling and searching the space of all possible such poses \cite{Katchalski1992}. Some methods use discrete Fourier transforms (DFT) to speed-up the search with a cumulative computation of trial solutions, \cite{Gabb1997,Chen2003b,Kovacs2003,Kozakov2006}; while others use spherical harmonic Fourier transforms (SHFT) \cite{Max1988,Duncan1993,Ritchie2000,Ritchie2008}. These methods have evolved to today's sophisticated systems that use the same basic principles with better density models, grid-free representations, and fast search techniques that are based on adaptive sampling and nonequispaced DFTs \cite{Bajaj2011,Bajaj2013}. For a more complete review of different approaches, the reader is referred to \cite{Eisenstein2004,Ritchie2008a}.

In spite of the substantial amount of research on shape complementarity for arrangements of spherical atoms (i.e., proteins), the problem is scarcely studied for objects made of different primitives. Here we propose a novel formulation and computational framework for objects of arbitrary shape in 3D, potentially extensible to higher dimensions. 


\section{Our Approach} \label{sec_method}

Before proceeding to a rigorous formalism in Section \ref{sec_form}, let us consider the state-of-the-art in the class of protein docking algorithms that are based on overlapping geometric density functions. Given a receptor protein $M_1$ and a ligand molecule $M_2$ in the 3D space, the common idea behind the majority of these methods is to formulate the so-called shape complementarity score $f(\tau; M_1, M_2)$ as a function of the relative rigid body transformation $\tau \in \mathrm{SE}(3)$:
\begin{equation}
    f(\tau; M_1, M_2) = (\rho_1 \ast \rho_2) (\tau) = \int_{\mathds{R}^3} \rho_1(\mathbf{p})~\rho_2(\tau^{-1}\mathbf{p})~ dv, \label{eq_method_1}
\end{equation}
where $\ast$ stands for the cross-correlation operator, and $dv$ is the differential volume element at $\mathbf{p} \in \mathds{R}^3$. The problem then reduces to an efficient search of the configuration space $\mathrm{SE}(3)$ of rigid motions for a subset of near-optimal poses that maximize $f(\tau)$.
The main challenge, however, lies in a proper formulation of the so-called {\it affinity functions} $\rho_{1,2}(\mathbf{p}) = \rho(\mathbf{p}, M_{1,2})$ for the two molecules, whose cross-correlation gives a high score when the two are in a ``proper fit'' pose.
A relatively successful and contemporary formulation is based on the double-skin layer (DSL) approach, illustrated in Figure \ref{figure3} (a). It was first introduced in \cite{Connolly1986,Wang1991} in 2D and 3D, respectively, implemented using grid-based equispaced FFTs in \cite{Katchalski1992}, generalized to complex affinities in \cite{Chen2003b}, and evolved to a grid-free nonequispaced FFT-based algorithm in \cite{Candas2005,Bajaj2011}.
Two skin regions are defined implicitly: 1) the grown skin region of the receptor, formed by an additional layer of pseudo-atoms populated over the solvent accessible surface (SAS); and 2) the surface skin region of the ligand, formed by the original solvent accessible atoms.
The remaining atoms in both molecules are labeled as core atoms. A radial-basis affinity function was proposed in \cite{Candas2005,Bajaj2011} as a summation of Gaussian kernels centered at the skin/core atoms and pseudo-atoms, modeling the space occupancy of the molecules as a smooth 3D density function. Here is (a slight variation of) their formulation:
\begin{equation}
    \rho(\mathbf{p}, M) = \sum_{1 \leq j \leq n} c_j ~ g_\sigma \left( \frac{\| \mathbf{p} - \mathbf{q}_j \|_2}{R_j} - 1 \right), \label{eq_method_2}
\end{equation}
where $n = n_{1,2}$ is the number of atoms (including pseudo-atoms) in one molecule $M = M_{1,2}$; $\mathbf{p} \in \mathds{R}^3$ is the query point, $\mathbf{q}_j \in \mathds{R}^3$ and $R_j$ are the center and radius of the $j$'th atom whose contribution to the sum is weighted by $c_j$ for $j = 1, 2, \cdots, n$, $\|\cdot\|_2$ is the $L_2-$norm, and $g_\sigma(x) \propto \exp[-\frac{1}{2}(x/\sigma)^2]$ is the isotropic Gauss function, the decay rate of which is governed by the constant factor $\sigma > 0$.

The weights $c_j$ are usually chosen to be different constants for skin and core atoms. A clever trick was used by performing computations on complex functions, picking real coefficients $c_j \propto \lambda_1$ for skin atoms and imaginary coefficients $c_j \propto \ii \lambda_2$ ($\ii = \sqrt{-1}$) for core atoms (typically $\lambda_2 > \lambda_1$).
Subsequently, $f'(\tau) := \mathrm{Re}\{ f(\tau)\}$ or $f'(\tau) := \mathrm{Re}\{ f(\tau)\} - \mathrm{Im}\{ f(\tau)\}$ can be used as the actual score for optimization. As a result, skin-skin overlaps between the receptor and ligand impart a positive real {\it reward}, core-core collisions impart a negative real {\it penalty}, and skin-core collisions optionally contribute a smaller penalty to the total score. This was improved to a more sophisticated skin-curvature and core-depth dependent weight assignment in \cite{Bajaj2013}.
The configuration space $\mathrm{SE}(3) \cong \mathrm{SO}(3) \rtimes \mathrm{T}(3)$ is then searched for the highest values of $f'(\tau)$ using an adaptive sampling of rotations in $\mathrm{SO}(3)$; for each trial rotation, the translation space $\mathrm{T}(3) \cong \mathds{R}^3$ is searched rapidly using either a 3D equispaced FFT (the well-known Cooley-Tukey algorithm \cite{Cooley1965}), or a 3D nonequispaced FFT (Potts et al.'s algorithm \cite{Potts2001}).

The elegance of this approach lies in the exploitation of the radial-basis Gaussian representation in speeding up the search process; it preserves the shape and location of the score profile, compared to the expensive gird-based FFT approaches \cite{Eisenstein2004}, using a significantly less amount of memory and computation time (up to two orders of magnitude) \cite{Bajaj2006}.
However, these implementations have been progressively calibrated and specialized for the particular application in protein docking. There are a few caveats when trying to extend the ideas to a purely geometric assessment of the shape complementarity for arbitrary shapes; for instance, the radial-basis approach imposes the following limitations:
\begin{enumerate}
    \item It requires an additional step to identify the skin/core classification of original atoms; however, despite being well-defined for globular proteins, the distinction might not be clear for molecules of extended shape.
    \item It imposes the requirement to populate a finite number of pseudo-atoms along the solvent-accessible surface, which can be done in infinitely many different ways, resulting in different scoring schemes.
\end{enumerate}
The outcomes of these pre-processing steps impart inevitable bias to the scoring outcome; in fact the algorithm that populates the pseudo-atoms and its parameters (e.g., the packing density or grid spacing in the algorithm given in \cite{Candas2005}) are likely to affect the optimal results even more than those of the affinity formulation in (\ref{eq_method_2}), which is not desirable.
A possible alleviation to the bias is to overpopulate the surface and grown skins with many pseudo-atoms, allowing penetrations between pseudo-atoms, at the expense of increasing the input size. The bias eventually disappears if $n_{1,2} \rightarrow \infty$, without modifying the shape of the cavity occupied by the original molecules.
This can be conceptualized by sweeping a Gaussian probe along the inside and outside of the protein surface.

Let us rewrite (\ref{eq_method_2}) with the skin-core separation:

\begin{equation}
    \rho(\mathbf{p}, M) = \lambda_1 \!\! \sum_{\text{skin}} g_\sigma \left( \frac{\eta}{R} - 1 \right) + \ii \lambda_2 \!\! \sum_{\text{core}} g_\sigma \left( \frac{\eta}{R} - 1 \right), \label{eq_method_2_sep}
\end{equation}
where $\eta = \eta(\mathbf{p}, \mathbf{q}) = \| \mathbf{p} - \mathbf{q} \|_2$ denotes the distances from the skin or core atom centers to the query point.
For simplicity, the probe radius $R > 0$ is assumed to be the same for all skin and core atoms, respectively, but do not have to be.
The skin and core in this case are defined by a finite number of center points of the atoms in the original molecule or the added pseudo-atoms whose center points are sampled on the offset surfaces (e.g., SAS).
To eliminate the bias arising from the arbitrary sampling, the finite sum can be replaced with an integral that represents the sum in the limit as the number of pseudo-atoms approaches $\infty$:
\begin{equation}
    \rho(\mathbf{p}, M) = \lambda_1 \!\! \int_{\text{skin}} \nquad g_\sigma \left( \frac{\eta}{R} - 1 \right) ~ds + \ii \lambda_2 \!\! \int_{\text{core}} \nquad g_\sigma \left( \frac{\eta}{R} - 1 \right) ~ds, \label{eq_method_2_int}
\end{equation}
where $ds$ is the differential surface element. The skin and core in this case are not finite sets, but are continuum surfaces defined by positive and negative offsets of the protein surface, respectively, i.e., $\pm R-$iso-surfaces of the signed distance function.\footnote{Replacing the core by a negative offset Gaussian shell is for the sake of simplifying the formulation, though one could still use the original atom centers (whose locations are not arbitrary, unlike pseudo-atoms) as in the second term on the right-hand side of (\ref{eq_method_2_sep}).}

\begin{figure}
    \centering
    \includegraphics[width=0.48\textwidth]{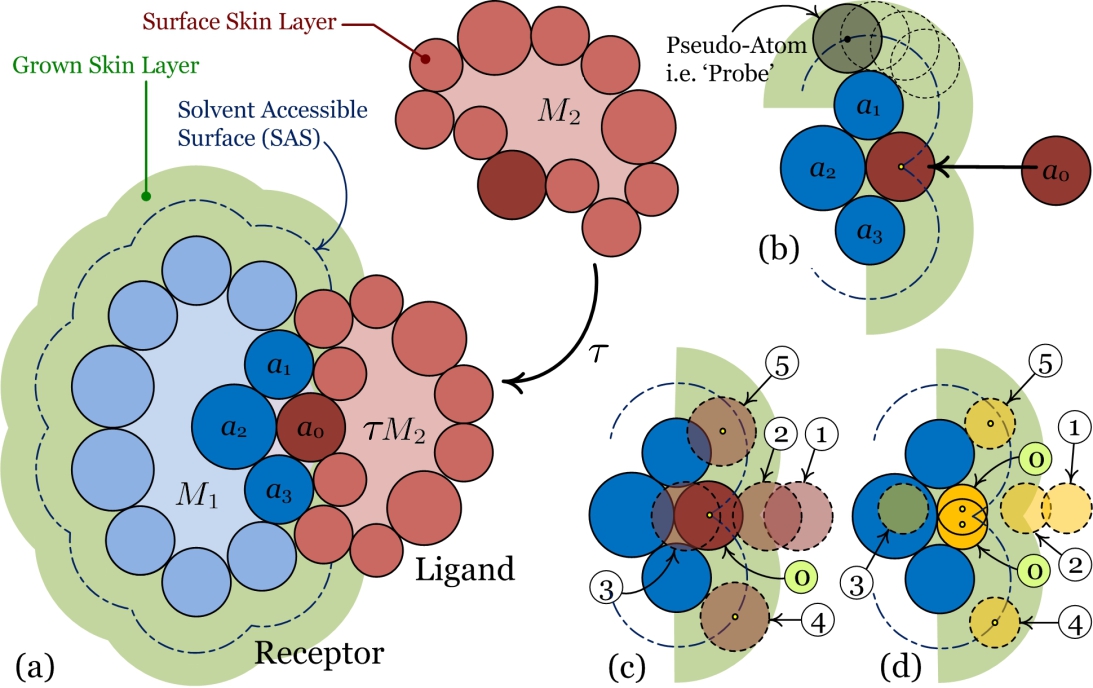}
    \caption{The double-skin layer technique for protein docking: the skin overlap does not discriminate between configurations of ligand protrusion with single and multiple contact points.} \label{figure3}
\end{figure}

A much simpler formulation can be obtained by directly applying the Gaussian kernel to the signed distance function, whose iso-surfaces were utilized; namely:
\begin{equation}
    \rho(\mathbf{p}, M) = \lambda_1 ~ g_\sigma \left( \frac{\xi}{R} - 1 \right) +\ii \lambda_2 ~ g_\sigma \left( \frac{\xi}{R} + 1 \right), \label{eq_method_3a}
\end{equation}
where $\xi = \xi(\mathbf{p}, M)$ is the signed distance function from each molecular surface and $R > 0$ is the probe radius as before. The first and second Gaussian terms create the swept-volume skin and core layers, respectively. This resembles the traditional skin-layer approach in \cite{Bajaj2011}, and can be adjusted to the improved skin-layer approach in \cite{Bajaj2013} where both the skin thickness $R$ and skin/core weights $\lambda_{1,2}$ are varied as a function of surface curvature and penetration depth, and the arguments of the Gaussian are offset to detach the skin from the original surface. Importantly, it is much easier to compute (\ref{eq_method_3a}) compared to (\ref{eq_method_2_int}) which requires explicit integration over offset surfaces.

The idea can be generalized to objects of arbitrary shape, at the expense of losing the speed and memory advantages of radial-basis formulation. There are several known inefficiencies, however, common to both radial-basis and swept-probe techniques, or their numerous possible variations, as long as they are based on overlapping skin/core layers in one way or another:
\begin{enumerate}
    \item The truncated scoring function has a compact support\footnote{The support of the scoring function $f(\tau): \mathrm{SE}(3) \rightarrow \mathds{C}$ is defined as $\mathrm{supp}(f) = \{ \tau \in \mathrm{SE}(3) ~|~ f(\tau) \neq 0 \}$. In this case, although the Gaussian kernel is not exactly zero at far distances, it is generally approximated with a truncated window function of compact support in the nonequispaced FFT implementation \cite{Potts2001}.} that spans an extremely narrow region of the configuration space, i.e., when the molecules are proximal. To the best of our knowledge, there is a lack of a proper formulation that is capable of providing strong local clues to guide the search algorithm, even when the objects are far away.
    \item Within the support region, there is little appreciation of the geometric features such as protrusion and depressions, sharp corners, and generally surface curvature similarities between the mating objects.
    \item The choice of a proper probe radius is not clear, esp. for objects of arbitrary shape: large probe size will miss small geometric features, while small probe size will further narrow down the support region.
\end{enumerate}
A na\"{\i}ve proposition to solve the first problem could be to replace the Gaussian with a different radial-basis kernel of slower decay rate---e.g., one of inverse-polynomial variation with distance, e.g., $\sim 1/\eta^c$ in (\ref{eq_method_2_sep}) and (\ref{eq_method_2_int}) or $\sim 1/\xi^c$ in (\ref{eq_method_3a}) with $c > 0$. The motivation comes from the electrostatic effect (Coulomb's law), which is the nature's means of steering molecules to bring them into proper contact \cite{Kuriyan2012}. Although it might help to guide the search towards the proper proximity, it also fades away the geometric specificity of the surfaces within the support region,\footnote{Note that nature also uses charge distributions, evolved to impart specificity, in addition to the geometric features.} and results in a flattened score function.

To better understand the other two problems and propose our solution, let us take a closer look at the group of atoms identified as $a_1$, $a_2$, and $a_3$ that form a depression on the receptor, which accommodates the protrusion made by atom $a_0$ on the ligand, shown in Figure \ref{figure3} (b). Alternative optimal and non-optimal positions of $a_0$ are illustrated in Figures \ref{figure3} (c) and (d), for different radii of $a_0$ compared to the probe radius. Horizontal movements from the optimal position (0) reduce the score due to either decreased skin-skin overlap (position (2)) or increased skin-core collision (position (3)), while the score is zeroed out to the right of position (1). However, the scoring is insensitive to sliding motions along the skin, giving the same score to positions (0) and (4) in panel (b), where the radius of $a_0$ happens to be equal to that of the probe. The scoring is less accurate if the skin was thicker, or $a_0$ was smaller, as in panel (c), in which case the algorithm favors position (4) over (5), although there is no contact at position 4, and gives the same lower score to positions (0) and (5).

At this point, it should be clear that a more objective and possibly more effective shape complementarity scoring function needs to take into account the {\it multiplicity} of contact, i.e., the number of contact points between the spheres in the simpler case of proteins, and more generally the topology and geometry of contact surface when assembling arbitrary objects. The superiority of position 0 in Figures \ref{figure3} (b) and (c) is realized by noting that $a_0$ is in contact with more than one atom from the partner molecule, i.e., its center is equidistant from multiple spherical sites, hence it lies on the interfaces of their Voronoi tessellation,\footnote{Here, the sites are spherical objects, hence the Voronoi diagram of the spheres, equivalent to Voronoi diagram of centers with additively weighted distance function (Johnson-Mehl model \cite{Johnson1939}) is of interest, which gives rise to hyperboloidical facets.}
and the multiplicity of contact is the same as the degree of the Voronoi $j-$faces $(0 \leq j \leq 2)$ (including edges and vertices), defined as the number of adjacent cells. This is illustrated for the same group of 4 atoms in Figure \ref{figure4} (a) through (c), for different radii of the ligand atom. An important observation is that the external subset of the Voronoi graph (i.e., union of $j-$faces) form the medial axis of the complementary space of the protein, while the atom centers form the most stable points of the medial axis of the interior; hence we propose the following:

\begin{hypo}
An effective model of geometric fit {\rm (i.e., shape complementarity)} for objects of arbitrary shape can be obtained from a comparative overlapping of shape skeletons between the mutually complement features.
\end{hypo}

\begin{figure}
    \centering
    \includegraphics[width=0.48\textwidth]{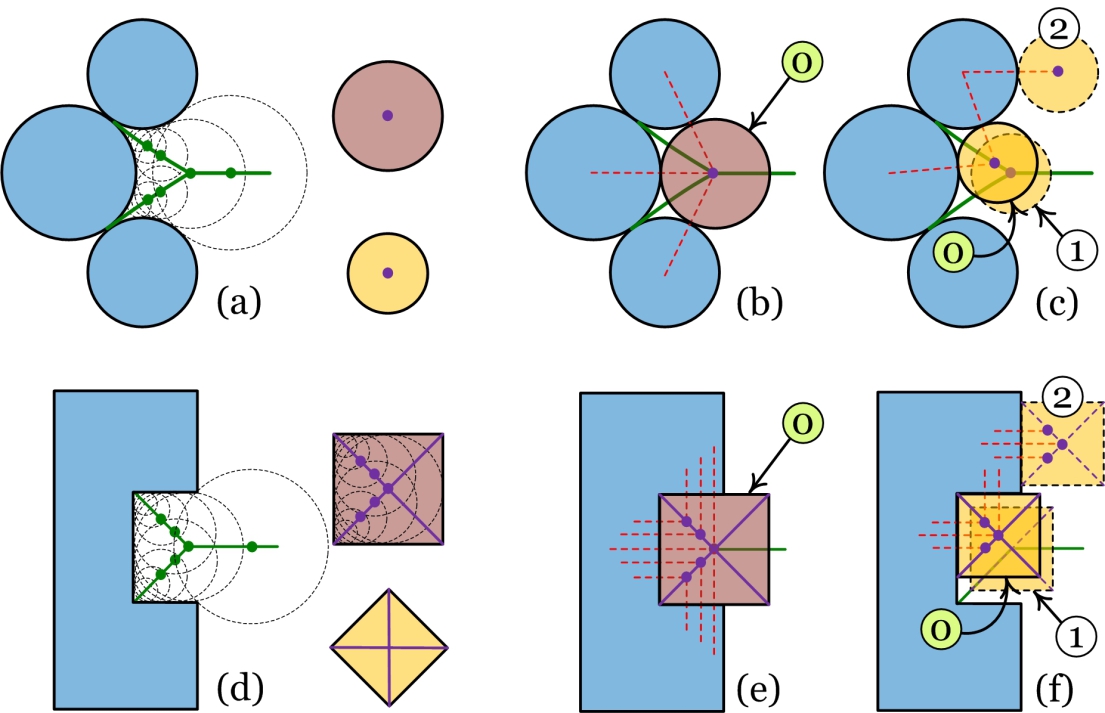}
    \caption{An overlapping of shape skeletons correlates with contact multiplicity (medial component), but proper contact also requires to reward proximity (proximal component).} \label{figure4}
\end{figure}

This can be achieved by overlapping the external skeleton of one object with the internal skeleton of its assembly partner. This is slightly harder to visualize when the primitives forming the objects are not spherical, as depicted for a simple case in Figure \ref{figure4} (d) through (f). The complementarity of geometric features, esp. those associated with sharp corners or local extremal curvatures, can be quantified by comparing the topology and geometry of surface contact between hypothetical balls that are swept to form the exterior of one object, with those that form the interior of its mating partner in a trial position. The subset of such balls that touch the boundary on multiple connected regions are the generator balls of the Medial Axis \cite{Choi1997}, suggesting the fundamental relationship between the skeletal topologies and shape complementarity. The usefulness of Medial Axes, as a compact abstraction of the topological and geometrical features of the object, is well known in the studies on shape similarities \cite{Tangelder2004, Torsello2004}; however, the following concerns are immediately observed:
\begin{enumerate}
    \item The conventional medial axis (and other variants of shape skeletons) are highly sensitive to noise/errors in shape data, making them extremely unstable and difficult to compute or approximate, even in 2D \cite{Lee1982, Attali2009}; and even harder to use for shape complementarity, esp. when assessing {\it approximate} complementarity.
    \item Obtaining a reliable measure of the extent of approximate overlap between the skeletons using their explicit representations is a challenge.
    \item Quantifying the multiplicity of contact, defined earlier as an integer when dealing with spherical primitives, is not a clear notion for general shapes.
\end{enumerate}
The difficulties associated with the cases of approximate complementarity are well-known (e.g., when the unbound objects are flexible, and the `induced fit' is made possible with deformations upon assembly) referred to as ``soft'' docking of proteins \cite{Jiang1991}. As illustrated in Figure \ref{figure5} for the simpler case of proteins, the overlaps between skeletons are not exact if the mating objects are not in perfect contact, demanding a flexible or ``weak'' definition for medial regions---see also Figure \ref{figure2} for a real example.

To solve this problem, we formulate an alternative representation of the shape skeleton, called the skeletal density function (SDF) that is implicit, continuous, and stable in the presence of errors, including noise in shape data or authentic shape discrepancies between the flexible parts before induced deformations during soft assembly.
Rather than sweeping a Gaussian probe for continuous double-skin layer generation as in (\ref{eq_method_2_int}), the Gaussian kernel could be assigned to the generator ball of the medial axis, e.g.,
\begin{equation}
    \rho(\mathbf{p}, M) = \lambda_1 \!\! \int_{\text{ext-MA}} \nqquad g_\sigma \left( \frac{\eta}{R} - 1 \right) ~ds + \ii \lambda_2 \!\! \int_{\text{int-MA}} \nqquad g_\sigma \left( \frac{\eta}{R} - 1 \right) ~ds, \label{eq_method_2_int_ma}
\end{equation}
where ext-/int-MA stand for external and internal medial axes, respectively, $\eta = \eta(\mathbf{p}, \mathbf{q}) = \|\mathbf{p} - \mathbf{q} \|_2$ is the distance to MA points, and $R \geq 0$ is the MA generator ball radius.
However, the medial axis is extremely difficult to compute and unstable under first and second order perturbations to surface geometry \cite{Attali2009} (which are common due to error/noise in data). In a similar fashion that  (\ref{eq_method_3a}) was obtained by applying the Gaussian kernel directly to the distance function and avoiding explicit computation of the offset surfaces in (\ref{eq_method_2_int}), it is desirable to avoid explicit computation of the medial axis.
This time, we apply the Gaussian kernel directly to the {\it distribution} of pairwise distances and its deviation from the minimum distance function to obtain an implicit generalization of the skeleton.

\begin{figure}
    \centering
    \includegraphics[width=0.48\textwidth]{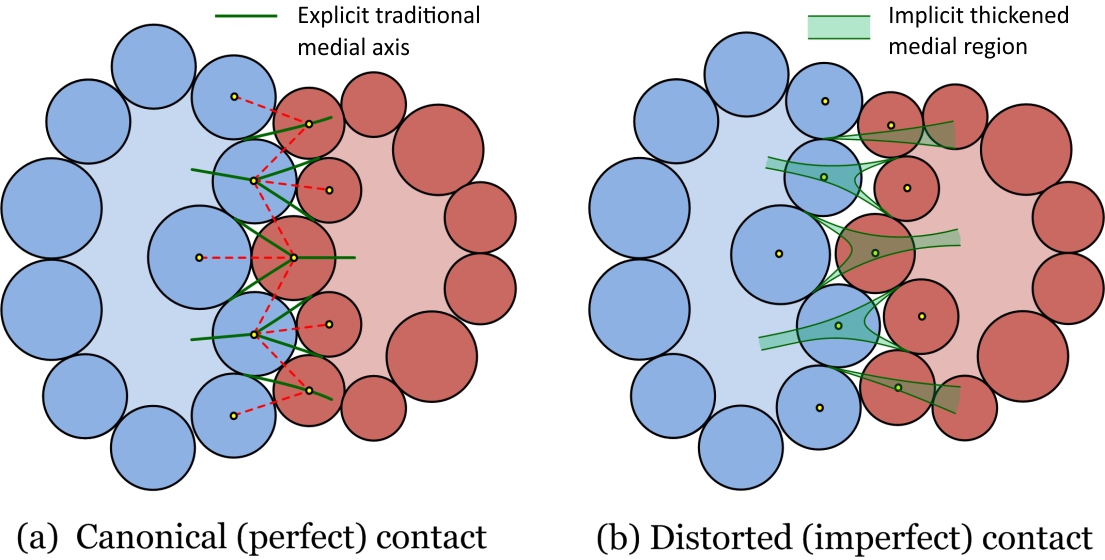}
    \caption{Traditional explicit skeletal diagrams are not sufficient to capture imperfect contact, hence a continuous implicit form is needed to give partial reward to approximate overlaps.} \label{figure5}
\end{figure}

In Section \ref{sec_form}, we propose an alternative formulation for the affinity functions $\rho_{1,2}(\mathbf{p})$ in the cross-correlation integral (\ref{eq_method_1}), which has as least two characteristic components:
\begin{enumerate}
    \item A `medial' component that characterizes the skeletal density, which will serve as a continuous measure of the multiplicity of contact with its generator balls.
    \item A `proximal' component that obligates the skeletal branches to stronger densities at the proximity of the surface, and decays as they move farther away.
\end{enumerate}
The first component results in a reward when the points of high skeletal density from the exterior of one object and the interior of its partner are overlapped in a trial pose - e.g., positions (0) and (1) in Figure \ref{figure4} are favored over position (2). The second component, on the other hand, favors proximity of the partner objects, hence ensures the superiority of position (0) over (1) in a proper trade off between the two effects. The combination of the two components results in an affinity function with a support region that resembles the branches of a ``thickened'' medial axis (see Figure \ref{figure5} (b)), slowly decaying away from the boundaries, yet possessing an intrinsic robustness to discrepancies, as a natural consequence of its definition as a density function that is continuous in both shape space and the $3-$space. We shall demonstrate that such a branched shape of the support region, together with the gradual decay of affinity, in turn results in a shape complementarity score function (\ref{eq_method_1}) that is capable of guiding the gradient-based search algorithms, moving the partner object along these skeletal branches toward the near-optimal fit position.


\section{Formulation} \label{sec_form}

The different problems ranging from mechanical assembly to molecular docking can be unified under a generic solid modeling representation.
Given two solids $S_1, S_2 \subset \mathds{R}^3$, the shape complementarity prediction problem can be divided into two separate steps:
\begin{enumerate}
    \item Formulating a shape complementarity {\it score function} $f(\tau; S_1, S_2)$ that quantifies their quality of fit, for a given relative transformation $\tau \in \mathrm{SE}(3)$.
    \item Developing an efficient algorithm to search the configuration space to obtain a subset of transformations $T \subset \mathrm{SE}(3)$ that result in affinities above a threshold $f(\tau) \geq f_0, ~ (\tau \in T)$; or proceeding further to find the optimal solution(s) $\tau^\ast \in T ~:~ f(\tau^\ast) = \max_{\tau \in T} \{ f(\tau) \}$.
\end{enumerate}
The restriction of the transformations to the special Euclidean group $\mathrm{SE}(3) \cong \mathrm{SO}(3) \rtimes \mathrm{T}(3)$ (i.e., combined proper rotations and translations) implies the assumption of rigidity for the mating shapes, which seems contradictory with our intention to implement soft assembly analysis. The flexibility is implicitly incorporated to some extent, however, by the choice of an affinity function that is resistant to small discrepancies between the shapes of the flexible parts---as in ``soft'' docking of proteins \cite{Jiang1991}.

It is desirable to obtain a score model that applies to objects of arbitrary shape, which is equivalent to a generic mapping $f: \mathrm{SE}(3) \times \mathcal{P}_\ast(\mathds{R}^3) \times \mathcal{P}_\ast(\mathds{R}^3) \rightarrow \mathds{R}$ (or $\mathds{C}$), which is a projection from the space of all possible shapes of the two solids and their relative transformations to the totally (or partially) ordered set of scores $\mathds{R}$ (or $\mathds{C}$). Here, $\mathcal{P}_\ast(\mathds{R}^3)$ stands for some proper subset of the power set of the $3-$space, often denoted $\mathcal{P}(\mathds{R}^3)$, to which we restrict our attention. On the other hand, all possible spatial relations between the two objects are identified as points $\tau \in \mathrm{SE}(3)$ that reside in the so-called configuration space \cite{Lozano-Perez1983}. The latter can be pictured by attaching coordinate frames to the objects, with respect to which the solids are described; then the configurations are the transformation that relate these frames by rigid motions \cite{Latombe1991,Joskowicz1999}.

For obvious practical reasons, we restrict our attention to solid objects, i.e., $\mathcal{P}_\ast(\mathds{R}^3)$ is chosen to be the collection of `r-sets', defined as bounded, closed, topologically regular, and semianalytic 3D sets whose boundaries (denoted $\partial S$) form piecewise smooth oriented $2-$manifold \cite{Shapiro2001}. Starting from the basic postulate that shape complementarity is concerned with the geometry of contact at the interface of the solids, the pointsets in the $3-$space that are relevant to this analysis can be reduced to those of the boundary manifolds. Such a formulation is particularly convenient for computations
when the two solids are specified using boundary representations (B-reps), which are widely popular in solid modeling \cite{Requicha1980}. The complex score function in (\ref{eq_method_1}) is repeated below for $f: \mathrm{SE}(3) \times \mathcal{P}_\ast(\mathds{R}^3) \times \mathcal{P}_\ast(\mathds{R}^3) \rightarrow \mathds{C}$:
\begin{equation}
    f(\tau, S_1, S_2) = (\rho_1 \ast \rho_2) (\tau) = \int_{\mathds{R}^3} \rho_1(\mathbf{p})~\rho_2(\tau^{-1}\mathbf{p})~ dv, \label{eq_form_8}
\end{equation}
where $\rho_{1,2}(\mathbf{p}) = \rho(\mathbf{p}, S_{1,2})$ stand for the shape-dependent affinity functions for $S_{1,2} \in \mathcal{P}_\ast(\mathds{R}^3)$, which is preserved under Euclidean isometries, i.e., $\rho(\mathbf{p}, \tau S) = \rho(\tau^{-1}\mathbf{p}, S)$. The rest of this Section is dedicated to formulate the universal affinity field that possesses the desired characteristics articulated in Section \ref{sec_method}.

We made the case in Section \ref{sec_method} that an effective affinity field extracts an abstraction of the shape in terms of the distribution of the minimum distance from the query point to the manifold, defined using the following notation:\footnote{In general, the distance between a point and a set is defined as the infimum (i.e., maximum lower-bound) of pairwise distances, since minimum may not exist (e.g., for open sets). However, for closed solids, the minimum exists and coincide with the infimum.}
\begin{equation}
    \xi(\mathbf{p}, S) = \mathrm{pmc}(\mathbf{p}, S) \min_{\mathbf{q} \in \partial S} \eta(\mathbf{p}, \mathbf{q}), \label{eq_form_10A}
\end{equation}
where $\eta(\mathbf{p}, \mathbf{q}) = \| \mathbf{p}- \mathbf{q} \|_2$ is the pairwise distance from boundary points $\mathbf{q} \in \partial S$ to the query point $\mathbf{p} \in \mathds{R}^3$ in which $\| \cdot \|_2$ denotes the $L_2-$norm, and $\mathrm{pmc}(\mathbf{p}, S) = +1$, $-1$, or $0$, is the point membership classification (PMC) of the query point as an external, internal, or boundary point, respectively \cite{Tilove1980a}. The PMC is equivalent to the knowledge of a consistent manifold orientation, e.g., one represented by a unit normal vector $\mathbf{n}(\mathbf{q}, S)$ for all $\mathbf{q} \in \partial S$ over the piecewise smooth manifold, and averaged on the sharp attachments.
The PMC is equivalent to a knowledge of a consistent manifold orientation, e.g., one represented by a unit normal vector $\mathbf{n}(\mathbf{q}, S)$ for all $\mathbf{q} \in S$ over a piecewise smooth manifold, and averaged on the sharp attachments.
Another definition that is useful in our formulation is the angle by which the orientation of the vector $(\mathbf{p} - \mathbf{q})$ connecting the boundary points $\mathbf{q} \in \partial S$ to the query point $\mathbf{p} \in \mathds{R}^3$ differs from the outward surface normal $\mathbf{n}(\mathbf{q}, S)$:
\begin{equation}
    \theta(\mathbf{p}, \mathbf{q}, S) = \cos^{-1} \left[ \frac{(\mathbf{p} - \mathbf{q})~}{\| \mathbf{p}- \mathbf{q} \|_2} \cdot \mathbf{n}(\mathbf{q}, S) \right], \label{eq_form_12a}
\end{equation}
where $\cdot$ stands for vector inner product.
The terminology in (\ref{eq_form_10A}) through (\ref{eq_form_12a}) is illustrated in 2D in Figure \ref{figure6} (a). Next, the concept of a maximal contact ball, pointed out in Section \ref{sec_method} with reference to the conventional medial axis, is extended as follows:

\begin{figure}
    \centering
    \includegraphics[width=0.48\textwidth]{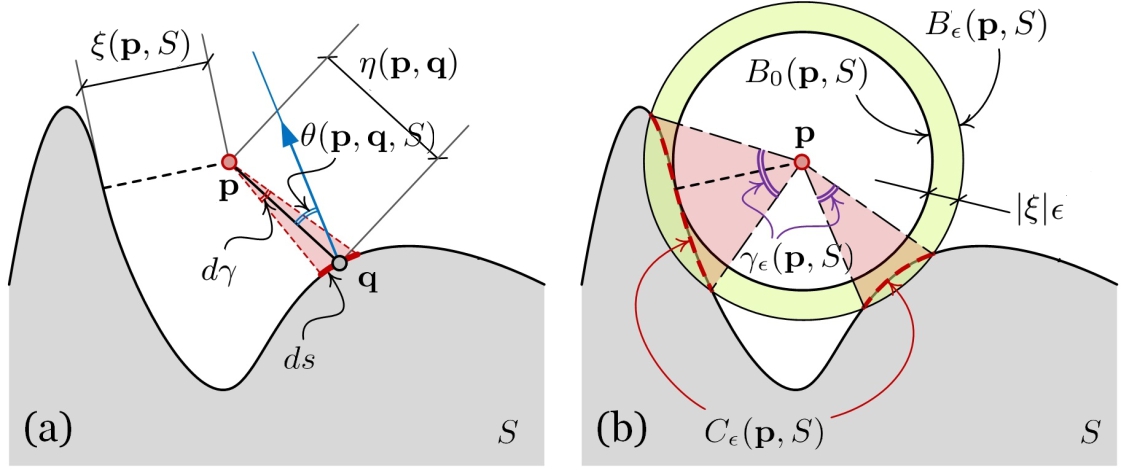}
    \caption{The SDF is formulated in terms of the distribution of distance from the query point to different boundary elements, integrated over the manifold region within the contact ball.} \label{figure6}
\end{figure}
\begin{defn}
    Given a solid $S \in \mathcal{P}_\ast(\mathds{R}^3)$, and a point $\mathbf{p} \in \mathds{R}^3$, the $\epsilon-$approximate contact ball (with the boundary) centered at $\mathbf{p}$ is defined as
    \begin{equation}
        B_\epsilon(\mathbf{p}, S) = \left\{ \mathbf{q} \in \mathds{R}^3 ~|~ \| \mathbf{p} - \mathbf{q} \|_2 \leq (1+\epsilon)|\xi(\mathbf{p}, S)| \right\}, \label{eq_form_10a}
    \end{equation}
    where $\epsilon \geq 0$ is an arbitrary constant. Subsequently, the $\epsilon-$approximate contact region is defined as the subset of the boundary $\partial S$ that lies within the contact ball:
    \begin{align}
        C_\epsilon(\mathbf{p}, S) &= B_\epsilon(\mathbf{p}, S) \cap \partial S \nonumber \\
        &= \left\{ \mathbf{q} \in \partial S ~|~ \| \mathbf{p} - \mathbf{q} \|_2 \leq (1+\epsilon)|\xi(\mathbf{p}, S)| \right\}. \label{eq_form_10b}
    \end{align}
    Note that contact regions are invariant (up to congruence) under Euclidean isometries, i.e., $C_\epsilon(\mathbf{p}, \tau S) \equiv C_\epsilon(\tau^{-1} \mathbf{p}, S)$ for all $\tau \in \mathrm{SE}(3)$, where $\equiv$ stands for congruence relation.\footnote{Two solids are congruent if one can be mapped to the other using a rigid transformation. In this case, $C_\epsilon(\mathbf{p}, \tau S) = \tau C_\epsilon(\tau^{-1} \mathbf{p}, S)$ thus $C_\epsilon(\mathbf{p}, \tau S) \equiv C_\epsilon(\tau^{-1} \mathbf{p}, S)$. Note that rigid transformations are homeomorphisms, thus congruent shapes have the same topological properties (e.g., number of connected components).}
\end{defn}
As pointed out in Section \ref{sec_method}, the {\it exact} contact ball $B_0(\mathbf{p}, S)$ corresponding to $\epsilon = 0$ is fundamentally related to the traditional definition of the shape skeleton. For the simple case of objects made of spherical primitives (i.e., proteins), $\mathbf{p} \in \mathds{R}^3$ is a $k$'th degree $j-$face ($1 \leq j \leq 2$) of the Voronoi diagram of the spheres if $|C_0(\mathbf{p}, S)| = k$, where $|\cdot|$ denotes set cardinality and $k \in \mathds{N} - \{1\}$ was referred to in Section \ref{sec_method} as the multiplicity of contact between $B_0(\mathbf{p}, S)$ and the boundary. For the case of arbitrary shapes, however, we argued that the notion of an integer multiplicity is not well-defined, since $|C_0(\mathbf{p}, S)|$ is not necessarily finite. In any case, by definition if $|C_0(\mathbf{p}, S)| > 1$, the query point lies on the medial axis.

Although the notions of medial axis and shape skeleton are not necessarily the same in the most general terms \cite{Choi1997}, both concepts are closely related to the geometry and topology of the contact region $C_0(\mathbf{p}, S)$, which could be composed of $0-$, $1-$ and $2-$dimensional components subject to a variety of possible degeneracies. The multiplicity of contact, in this case, can be defined as the number of connected components in $C_0(\mathbf{p}, S)$ (called `contact components' in \cite{Choi1997}), or that of $C_\epsilon(\mathbf{p}, S)$ as $\epsilon \to 0^+$, which need not be necessarily the same for all shapes.

Given two shapes $S_{1,2} \in \mathcal{P}_\ast(\mathds{R}^3)$ and a configuration $\tau \in \mathrm{SE}(3)$, a comparison of the number of contact components in $C_0(\mathbf{p}, S_1)$ and $C_0(\mathbf{p}, \tau S_2) \equiv C_0(\tau^{-1}\mathbf{p}, S_2)$ can identify if the two manifolds are in extensive exact contact with each other. However, a small perturbation of the manifold can invoke dramatic changes in the topology of $C_0(\mathbf{p}, S)$, making the medial axis and shape skeleton extremely unstable, and the shape complementarity score function (depending on it) inherently unreliable. On the other hand, when $\epsilon > 0$ is used in (\ref{eq_form_10a}) and (\ref{eq_form_10b}), the contact region is guaranteed to contain at least one $2-$dimensional region, as a result of the continuity condition on the manifold. The shape complementarity assessments are then built on the geometrical properties of $C_\epsilon(\mathbf{p}, S)$ (e.g., the extent of contact region), instead of the topological structure of $C_0(\mathbf{p}, S)$ (e.g., the number of contact components), to be implicitly embodied in the definition of the affinity function that follows.

Using the formulation in (\ref{eq_form_8}), all possible candidates for the affinity function $\rho: \mathds{R}^3 \times \mathcal{P}_\ast(\mathds{R}^3) \rightarrow \mathds{C}$ share at least two fundamental attributes, namely, 1) they all intend to extract the shape abstraction as real or complex fields over the $3-$space, hence are projections of the shape from a high-dimensional space $\mathds{R}^3 \times \mathcal{P}_\ast(\mathds{R}^3)$ into a 1D or 2D space (i.e., a single real or complex number per shape for a fixed point $\mathbf{p} \in \mathds{R}^3$); and 2) they all use the Euclidean geometry of the $2-$manifold in the $3-$space as observed from the query point $\mathbf{p} \in \mathds{R}^3$, which can be locally represented as a {\it distribution} of pairwise distances $\eta(\mathbf{p}, \mathbf{q}) = \| \mathbf{p} - \mathbf{q} \|_2$ from the boundary points to the query point. This leads to the following natural decomposition of the process to derive an affinity field $\rho: \mathds{R}^3 \times \mathcal{P}_\ast(\mathds{R}^3) \rightarrow \mathds{C}$:
\begin{enumerate}
    \item A projection $\zeta: \mathds{R}^3 \times \mathds{R}^3 \times \mathcal{P}_\ast(\mathds{R}^3) \rightarrow \mathds{C}$ that captures the essential distance distribution for the manifold as observed from $\mathbf{p}$.
    \item Applying a kernel $\phi: \mathds{C} \rightarrow \mathds{C}$ over the distance data to obtain the affinity $\rho(\mathbf{p}, S)$.
\end{enumerate}
The idea is schematically shown in Figure \ref{figure7}. The first step can be accomplished in a number of different ways. To obtain a visually meaningful and practically conducive layout, the following combination is proposed:

\begin{defn}
    Given a solid $S \in \mathcal{P}_\ast(\mathds{R}^3)$ and the points $\mathbf{p} \in \mathds{R}^3$ and $\mathbf{q} \in \partial S$, the complex projection $\zeta: \mathds{R}^3 \times \mathds{R}^3 \times \mathcal{P}_\ast(\mathds{R}^3) \rightarrow \mathds{C}$ is defined as:
    \begin{equation}
        \zeta(\mathbf{p}, \mathbf{q}, S) = \xi(\mathbf{p}, S) \pm \ii \eta(\mathbf{p}, \mathbf{q}), \label{eq_form_11}
    \end{equation}
    where $\mathrm{Re}\{\zeta\} = \xi(\mathbf{p}, S)$ and $\mathrm{Im}\{\zeta\} = \pm \eta(\mathbf{p}, \mathbf{q})$ were defined in {\rm (\ref{eq_form_10A})}, and the choice of $\pm$ is left arbitrary as a convention. Subsequently, the complex spread of the boundary as viewed from the query point is defined as:
    \begin{equation}
        \zeta(\mathbf{p}, \partial S, S) = \big\{ \zeta(\mathbf{p}, \mathbf{q}, S) ~|~ \mathbf{q} \in \partial S \big\}, \label{eq_form_12}
    \end{equation}
\end{defn}
\begin{figure}
    \centering
    \includegraphics[width=0.48\textwidth]{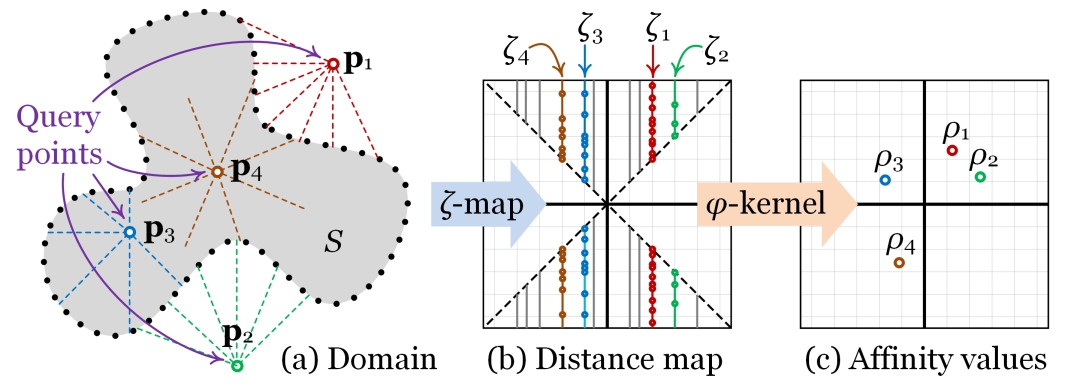}
    \caption{The affinity computation is decomposed into two steps: a projection in (\ref{eq_form_12}) characterizing distance distributions from the observer, followed by applying the $\phi-$kernel in (\ref{eq_form_13}).} \label{figure7}
\end{figure}
\begin{figure}
    \centering
    \includegraphics[width=0.48\textwidth]{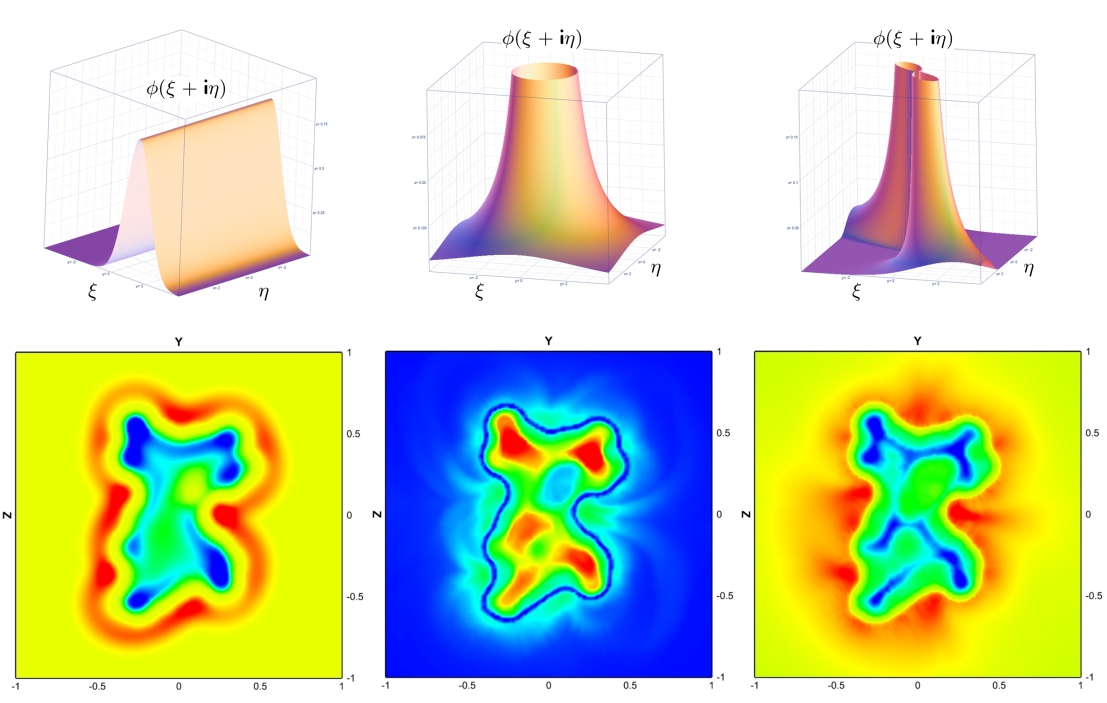}
    \caption{A few choices of the $\phi-$kernel (top) and a 2D section of the resulting affinity field for Saquinavir shown earlier in Figure \ref{figure2} (bottom). From left to right, $\phi(\zeta) \propto g_\sigma((\mathrm{Im}(\zeta)/R) - 1)$, $\phi(\zeta) \propto |\zeta|^{-2}$, and $\phi(\zeta) \propto |\zeta|^{-2} g_\sigma ( |\tan \angle \zeta| - 1)$. The last one is the kernel of choice in (\ref{eq_form_15}).} \label{figure8}
\end{figure}

The $\zeta-$map can be regarded as a projection of the shape into the complex plane from the viewpoint of a fixed observer $\mathbf{p} \in \mathds{R}^3$, retaining the distances $\|\mathbf{p} - \mathbf{q}\|_2$ while the orientation of $(\mathbf{p} - \mathbf{q})$ is lost. However, different projections at different query points collectively form a complete representation of the boundary; for instance, the manifold is implicitly specified as the signed distance iso-surface $\partial S = \{ \mathbf{p} \in \mathds{R}^3 ~|~ \mathrm{Re}\{\zeta(\mathbf{p}, \mathbf{q}, S)\} = 0 \}$, given any fixed boundary point $\mathbf{q} \in \partial S$.
Therefore, the $\zeta-$map contains all the information necessary to reproduce the shape (i.e., is informationally complete), but provides a representation that facilitates defining the affinity function.

Figure \ref{figure7} schematically shows the $\zeta-$map applied to a simple planar solid, where the complex spread $\zeta(\mathbf{p}, \partial S, S) \subset \mathds{C}$ is sketched for 3 query points $\mathbf{p}_1$, $\mathbf{p}_2$, and $\mathbf{p}_3$, each against a number of boundary points. It is clear from the definition in (\ref{eq_form_11}) that iso-surfaces of the signed distance function $\xi(\mathbf{p}, S)$ in the domain shown in panel (a), are each mapped to a closed vertical line segment in the complex plane, shown in panel (b); in fact a single query point $\mathbf{p}$ alone is mapped into a set of points $\zeta(\mathbf{p}, \partial S, S) = \xi(\mathbf{p}, S) + \ii \eta(\mathbf{p}, \partial S)$ spread along the aforementioned line segment, where each complex point $(\xi, \eta)$ corresponds to one or more ordered pair $(\mathbf{p}, \mathbf{q}) \in (\mathds{R}^3 \times \partial S$) that are a distance $\eta$ apart. The interior and exterior points are mapped to the left and right half-planes, respectively. The range is restricted to the phase angles $\pi/4 \leq \angle \zeta \leq 3\pi/4$ (where $\tan\angle\zeta = (\eta/\xi)$), since $\eta \geq |\xi|$ as a result of the definition in (\ref{eq_form_10A}), and the equality $\eta = |\xi|$ is exclusive to the pairs for which $\mathbf{q}$ is a nearest neighbor of $\mathbf{p}$ on the object surface, i.e., $\mathbf{q} \in C_0(\mathbf{p}, S)$. Furthermore, if $\mathbf{q} \in C_\epsilon(\mathbf{p}, S)$ for $\epsilon > 0$, as schematically depicted in panel (c), then $\zeta(\mathbf{p}, \mathbf{q}, S)$ lies in the closed region between the lines $\eta = |\xi|$ and $\eta = (1+\epsilon)|\xi|$, shaded in panel (b). For a given observer $\mathbf{p}$, the extent of the surface encapsulated within $C_\epsilon(\mathbf{p}, S)$ normalized with $\eta^2$---i.e., measured in terms of the {\it spatial angle}, as shown in panel (d)---is characterized in terms of the density of the layout within the phase angles $1\leq |\tan \angle \zeta| \leq (1+ \epsilon)$. This, in turn, can be magnified by a proper phase-dependent kernel. Subsequently, the affinity function is defined as follows:

\begin{figure*}
    \centering
    \includegraphics[width=1.00\textwidth]{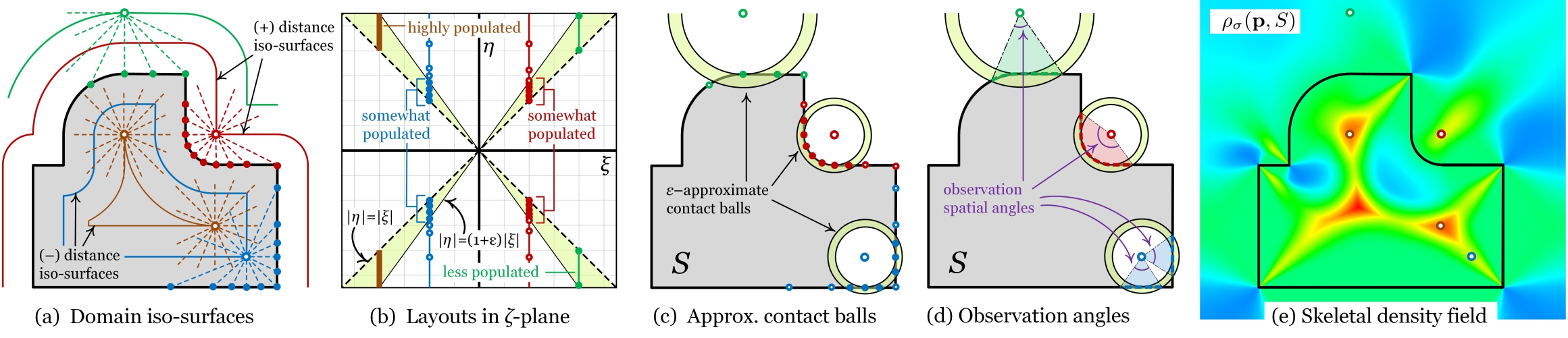}
    \caption{When a given point in the $3-$space is equidistant to an extensive region on the boundary to an $\epsilon-$approximation, it results in a $\zeta-$projection of the shape that is highly populated around the bisector $|\tan \angle \zeta| = 1$, hence a high skeletal density.} \label{figure9}
\end{figure*}

\begin{defn}
    Given a solid $S \in \mathcal{S}$, and a point $\mathbf{p} \in \mathds{R}^3$, the affinity function $\rho: \mathds{R}^3 \times \mathcal{S} \rightarrow \mathds{C}$ is defined as follows:
    \begin{equation}
        \rho(\mathbf{p}, S) = \oint_{\partial S} (\phi \circ \zeta)(\mathbf{p}, \mathbf{q}, S) ~\cos \theta(\mathbf{p}, \mathbf{q}, S) ~ d s, \label{eq_form_13}
    \end{equation}
    where $\zeta(\mathbf{p}, \mathbf{q}, S)$ was defined in {\rm (\ref{eq_form_11})}, $\theta(\mathbf{p}, \mathbf{q}, S)$ was defined in {\rm (\ref{eq_form_12a})}, and $d s$ is the differential surface element at $\mathbf{q} \in \partial S$.
\end{defn}
The choice of the kernel $\phi: \mathds{C} \rightarrow \mathds{C}$ determines the share of the points on the complex map $\zeta(\mathbf{p}, \partial S, S)$ to the integral in (\ref{eq_form_13}), thus should be chosen properly to impart the desired `medial' and `proximal' components, discussed in Section \ref{sec_method} into the affinity function. The term $\cos \theta$ is provided to assign each boundary element a weight of $d s_\bot = \cos \theta d s$, that is, the infinitesimal surface area perpendicular to the line of observation $(\mathbf{p} - \mathbf{q})$.

Before proceeding to propose a conceptually appealing and experimentally effective from for the $\phi-$kernel, it is worthwhile noting that (\ref{eq_form_13}) in its general form represents a platform for a variety of different scoring schemes, unified under a single general formulation. The differences between these methods would then be abstracted only in their particular choice of the $\phi-$kernel, as a result of the process decomposition illustrated in Figure \ref{figure7}. For instance, for a characteristic radius of $R > 0$, a choice of $\phi(\xi + \ii \eta) \propto g_\sigma(\frac{\eta}{R} - 1)$ or $\phi(\xi + \ii \eta) \propto g_\sigma(\frac{\xi}{R} \pm 1)$---using only one of $\xi = \mathrm{Re}\{ \zeta\}$ or $\eta = \mathrm{Im}\{ \zeta\}$ at a time---result in affinity functions that resemble the finite radial-basis or swept-volume double-skin layer affinity functions discussed in Section \ref{sec_method} (Figure \ref{figure8} (a)). To expand the narrow support region, one could try a more slowly decaying kernel, e.g., of inverse-square form $\phi(\zeta) \propto |\zeta|^{-2}$, which eliminates the characteristic size $R$, at the expense of less geometric specificity (Figure \ref{figure8} (b)). Figure \ref{figure8} illustrates a few examples of $\phi-$kernels and the resulting $\rho-$function.
We use a combination of ideas presented in Section \ref{sec_method} to devise a kernel that makes use of both $|\zeta| = (\xi^2 + \eta^2)^\frac{1}{2}$ and $\angle \zeta = \eta /\xi$ to provide a more effective shape descriptor that captures the essential properties of the distance distributions:

\begin{defn}
    We propose the kernel $\phi(\zeta): (\mathds{C} - \mathds{I}) \rightarrow \mathds{C}$:
    \begin{equation}
        \phi_\sigma(\zeta) = \frac{\lambda(\zeta)}{\sqrt{2\pi}} ~\frac{1}{\zeta^2}~ g_\sigma \left( |\tan \angle \zeta| - 1 \right), \label{eq_form_15}
    \end{equation}
    where $g_\sigma(x) = (\sqrt{2\pi}\sigma)^{-1} \exp[-\frac{1}{2}(x/\sigma)^2]$ is the isotropic Gauss function, and $\sigma > 0$ is the thickness factor. The coefficient $\lambda: (\mathds{C} - \mathds{I}) \rightarrow \{+\lambda_1, -\lambda_2\}$ is chosen as:
    \begin{equation}
        \lambda(\zeta) =
        \begin{cases}
            +\lambda_1 & \text{if } \mathrm{Re}\{ \zeta \} > 0,\\
            -\lambda_2 & \text{if } \mathrm{Re}\{ \zeta \} < 0.
        \end{cases}\label{eq_form_16c}
    \end{equation}
\end{defn}
The kernel in (\ref{eq_form_15}) is discontinuous on the imaginary line $\mathds{I} = \ii\mathds{R}$, and is unbounded around $\zeta = 0$ (i.e., when the query point approaches the boundary); however, the integral in (\ref{eq_form_13}) is still convergent at the vicinity of the boundary, since $\tan \angle \zeta \rightarrow \pm\infty$ hence $\phi(\zeta) \rightarrow 0$, when $\xi/\eta \rightarrow 0$. Figure \ref{figure8} (c) plots the kernel, and the resulting density function on the drug molecule in Figure \ref{figure2}.

As discussed in the introduction, the kernel is composed of a `medial' component $\phi_M(\zeta) = g_\sigma(|\tan \angle \zeta| - 1)$, a `proximal' component $\phi_P(\zeta) = 1/(\sqrt{2\pi} \zeta^2)$, and a coefficient $\lambda(\zeta) = \pm \lambda_{1,2}$.
The medial component $\phi_M$ takes the ratio $|\eta/\xi| = |\tan \angle \zeta|$ as the argument, giving rise to linear iso-curves $\phi_M = \text{\rm const}$, with the highest value of $\phi_M = (\sqrt{2\pi} \sigma)^{-1}$ when $\tan \angle \zeta = \pm 1$ on the bisectors $|\xi| = |\eta|$, and highly localized around it if $\sigma \ll 1$. This manifests in the affinity function in (\ref{eq_form_13}) as a higher density assignment to the query point $\mathbf{p} \in \mathds{R}^3$ that has an extensive set of approximately closest boundary elements (compare positions (0) and (1) in Figure \ref{figure4}).
The proximal component $\phi_P$, on the other hand, is provided to ensure contact between the assembly partners when their affinities are convolved in (\ref{eq_form_8}), as clarified in Section \ref{sec_method}, by introducing an inverse-square decay to the integrand in (\ref{eq_form_13}) (compare positions (0) and (1) in Figure \ref{figure4}.) In contrast to $\phi_M$ that was phase-dependent, the argument of $\phi_P$ is the amplitude $|\zeta|$, giving rise to circular iso-curves $\phi_P = \text{const}$, with $\lim_{\zeta \rightarrow 0} |\phi_P| = +\infty$.
The proposed $\phi-$kernel is singular at $\zeta = 0$, partly due to $|\phi_P| \rightarrow \infty$, and also because $\phi_M$ is undefined due to ambiguous $\angle \zeta$, and $\lambda$ is discontinuous. Therefore, the affinity is undefined for $\mathbf{p} \in \partial S$; however, it can be verified that $\rho(\mathbf{p}, S) \rightarrow 0^+$ when the query point $\mathbf{p}$ approaches the boundary ($\xi(\mathbf{p}, S) \rightarrow 0^\pm$), and the cross-correlation $f(\tau; S_1, S_2) = (\rho_1 \ast \rho_2) (\tau)$ is convergent for piecewise smooth boundaries.

Substituting for $\phi(\zeta)$ from (\ref{eq_form_15}) in (\ref{eq_form_13}) results in an implicit definition of a space-continuous SDF in the $3-$space, the high-density regions of which for $\sigma \ll 1$ resemble the $j-$faces $(0 \leq j \leq 2)$ of conventional skeletons:
\begin{equation}
    \rho_\sigma(\mathbf{p}, S) = \frac{\lambda}{\sigma} \oint_{\partial S} \frac{e^{-\frac{1}{2}(|\tan \varphi| - 1)^2/\sigma^2 - 2 \ii \varphi}}{(\cot^2{\varphi} + 1)/2} ~ \frac{d s_\bot}{4\pi \eta^2}, \label{eq_form_17}
\end{equation}
where $d s_\bot := \cos \theta ~ d s$ is the projected infinitesimal signed surface element, in which $\theta$ is the short form for the gaze angle $\theta(\mathbf{p}, \mathbf{q}, S)$ defined in (\ref{eq_form_12a}), and $\varphi$ is the short form for the phase of $\zeta-$map $\varphi(\mathbf{p}, \mathbf{q}, S) = \angle \zeta = \tan^{-1} (\eta/\xi)$. The interpretation of (\ref{eq_form_17}) is illustrated in Figure \ref{figure9}.

The largest contributions to the integral are due to the points with $|\tan \varphi| \approx 1$, at which the amplitude of the first fraction of the integrand evaluates to almost unity; the remaining term $d \gamma := \eta^{-2} ~d s_\bot = \eta^{-2} \cos \theta ~d s$ is the differential signed spatial angle by which the query point $\mathbf{p} \in \mathds{R}^3$ observes the surface element at $\mathbf{q} \in \partial S$, normalized by the term $4\pi$, clarifying part of the reason behind the inverse-square form used in (\ref{eq_form_15}). This can also be viewed as a surface integration over a sphere of radius $\eta$ centered at the query point $\mathbf{p}$ (namely, $\partial B_\epsilon(\mathbf{p}, S)$ for $\epsilon := |\eta/\xi| - 1$) normalized by the sphere surface area $4\pi \eta^2$.

On the other hand, when $|\tan \varphi|$ deviates significantly from unity, the Gaussian nominator vanishes. As a result, the density function can be approximated to an integration over the $\epsilon-$approximate contact region $C_\epsilon(\mathbf{p}, S)$ defined in (\ref{eq_form_10b}), where $\epsilon / \sigma$ determines the error. This leads to the following definition for the `truncated' affinity function:
\begin{equation}
    \bar{\rho}_{\sigma, \epsilon}(\mathbf{p}, S) = \frac{\lambda}{\sigma} \int_{C_\epsilon(\mathbf{p}, S)} \nquad \!\!\!\! \frac{e^{-\frac{1}{2}(|\tan \varphi| - 1)^2/\sigma^2 - 2 \ii \varphi}}{(\cot^2{\varphi} + 1)/2} ~ \frac{d s_\bot}{4\pi \eta^2}, \label{eq_form_17t}
\end{equation}
where $C_\epsilon(\mathbf{p}, S)$ is defined in (\ref{eq_form_10b}). The function $\bar{\rho}_{\sigma, \epsilon}(\mathbf{p}, S)$ approximates $\rho_\sigma(\mathbf{p}, S) = \lim_{\epsilon \to \infty} \bar{\rho}_{\sigma, \epsilon}(\mathbf{p}, S)$ , with an error of $|E_{\sigma, \epsilon}(\mathbf{p}, S)| = |\rho_\sigma(\mathbf{p}, S) - \bar{\rho}_{\sigma, \epsilon}(\mathbf{p}, S)| \ll 1$ if $\epsilon/\sigma \gg 1$.
Note also that the truncated affinity function $\bar{\rho}_{\sigma, \epsilon}(\mathbf{p}, S)$ could be obtained directly from (\ref{eq_form_13}) by using a truncated kernel $\bar{\phi}_{\sigma, \epsilon}(\zeta)$, which, in turn, is defined using a similar formula to that of (\ref{eq_form_15}), in which the Guassian is substituted with the truncated Gaussian defined as $\bar{g}_{\sigma, \epsilon}(x) = g_\sigma(x)$ if $|x| \leq \epsilon$, and $\bar{g}_{\sigma, \epsilon}(x) = 0$ otherwise. This introduces, for instance, an error of $< 10^{-4} \times g_\sigma(0)$ if $\epsilon > 4.3 \sigma$.

Using (\ref{eq_form_17t}) instead of (\ref{eq_form_17}) restricts the computations to the $\epsilon-$approximate nearest neighbors ($\epsilon-$ANN) of the query point, which is a much smaller set than the entire boundary. This significantly speeds up the implementation explained in Section \ref{sec_implement} with a negligible compromise in the accuracy due to neglecting the tail of the Gauss function.\footnote{Moreover, one can argue that the precise choice of a $\phi-$kernel was arbitrary to some extent, and there is no reason not to pick $\bar{\phi}_{\sigma, \epsilon}$ over $\phi_{\sigma}$ other than the elegance of simpler formulation for the latter, while better performance is achievable with the former.}

To make the tradeoff precise, (\ref{eq_form_17}) is first rearranged into a 1D integral, noting that the variable of integration can be changed to $r := |\tan \varphi| = |\eta/\xi| = (1 + \epsilon)$:

\begin{prop}
    If the $\phi-$kernel obeys the inverse-square decay law $\phi(\alpha \zeta) = \alpha^{-2} \phi(\zeta)$ for all $\alpha \in \mathds{R}$, the affinity integral can be rearranged into the following 1D form:
    \begin{equation}
        \rho(\mathbf{p}, S) = \int_1^{|\frac{\mu}{\xi}|} \gamma_{r-1}'(\mathbf{p}, S) ~ \phi \big( \mathrm{pmc}(\mathbf{p}, S) \pm \ii r \big) ~r^2 ~ dr, \label{eq_form_17A}
    \end{equation}
    where $\xi = \xi(\mathbf{p}, S)$ and $\mu = \mu(\mathbf{p}, S)$ are the min/max distances to the boundary, respectively, and $\gamma_\epsilon'(\mathbf{p}, S) := \frac{\partial~}{\partial \epsilon}\gamma_\epsilon(\mathbf{p}, S)$ in which $\gamma_\epsilon(\mathbf{p}, S)$ is the total spatial angle by which the contact region is observed from $\mathbf{p} \in \mathds{R}^3$:
    \begin{equation}
        \gamma_\epsilon(\mathbf{p}, S) = \int_{C_\epsilon(\mathbf{p}, S)} ~\frac{d s_\bot}{4\pi \eta^2} = \int_{C_\epsilon(\mathbf{p}, S)} \cos \theta ~\frac{d s}{4\pi \eta^2}.
    \end{equation}
    As before, $\mathrm{pmc}(\mathbf{p}, S) = +1$, $-1$, or $0$, while the choice of $\pm$ in $\pm \ii r$ comes from that of (\ref{eq_form_11}).
\end{prop}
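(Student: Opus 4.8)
The plan is to start from the defining surface integral for the affinity, equation (\ref{eq_form_13}), written compactly as $\rho(\mathbf{p}, S) = \oint_{\partial S} \phi(\zeta(\mathbf{p}, \mathbf{q}, S))\, ds_\bot$ with $ds_\bot = \cos\theta\, ds$, and to collapse it to a one-dimensional integral over the radial variable $r = \eta/|\xi| = |\tan\varphi|$ by two moves: an algebraic application of the scaling hypothesis, followed by a coarea-style change of variables that regroups boundary points according to their distance from $\mathbf{p}$.

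First I would use the fact that, for a \emph{fixed} query point $\mathbf{p}$, the signed distance $\xi = \xi(\mathbf{p}, S)$ is constant with respect to the integration variable $\mathbf{q} \in \partial S$. By (\ref{eq_form_11}) I factor $\zeta = \xi \pm \ii\eta = |\xi|\,(\mathrm{pmc}(\mathbf{p}, S) \pm \ii r)$, where $\mathrm{pmc} = \xi/|\xi| = \pm 1$ and $r = \eta/|\xi| \geq 1$ (the bound $r\geq 1$ being exactly $\eta\geq|\xi|$ from (\ref{eq_form_10A})). The scaling hypothesis $\phi(\alpha\zeta) = \alpha^{-2}\phi(\zeta)$ with $\alpha = |\xi| > 0$ then gives $\phi(\zeta) = |\xi|^{-2}\,\phi(\mathrm{pmc} \pm \ii r)$, and since $|\xi|^{-2}$ does not depend on $\mathbf{q}$ it factors out, leaving $\rho = |\xi|^{-2}\oint_{\partial S}\phi(\mathrm{pmc} \pm \ii r)\, ds_\bot$.

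The substantive step is the change of variables from the surface to $r$. I would regroup boundary points by their common value of $r$ (equivalently $\epsilon = r-1$, equivalently $\eta = r|\xi|$), carrying along the signed differential solid angle $d\gamma = ds_\bot/\eta^2$ under which $\mathbf{p}$ views the surface element at $\mathbf{q}$ (the unnormalized convention written just before (\ref{eq_form_17t})). Its cumulative value over $C_\epsilon(\mathbf{p}, S)$ is precisely $\gamma_\epsilon(\mathbf{p}, S)$ by definition, so the pushforward of this measure under the map $\mathbf{q}\mapsto r(\mathbf{q})$ has density $\gamma'_{r-1} = (\partial\gamma_\epsilon/\partial\epsilon)|_{\epsilon = r-1}$ against $dr$, yielding $\oint_{\partial S}g(r)\,d\gamma = \int_1^{|\mu/\xi|}g(r)\,\gamma'_{r-1}\,dr$ for any $g$. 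Writing $ds_\bot = \eta^2\,d\gamma = r^2|\xi|^2\,d\gamma$ and substituting, the $|\xi|^{-2}$ cancels against $|\xi|^2$, the integrand depends on $\mathbf{q}$ only through $r$, and the surface integral collapses to the claimed form (\ref{eq_form_17A}), with limits fixed by the minimum distance ($\eta=|\xi|$, i.e.\ $r=1$) and the maximum distance ($\eta=\mu$, i.e.\ $r=|\mu/\xi|$).

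The hard part will be justifying the coarea decomposition rigorously: one must verify that $\gamma_\epsilon$ is absolutely continuous, indeed differentiable, in $\epsilon$ so that $\gamma'_{r-1}$ exists and genuinely serves as the pushforward density. This follows from the r-set hypothesis on $\partial S$ (a piecewise smooth oriented $2$-manifold), under which $C_\epsilon$ grows continuously and the level sets $\{r = \mathrm{const}\}$ are generically $1$-dimensional, hence of zero surface measure and zero solid angle except on a measure-zero set of radii; the coarea formula (or Fubini applied to the pushforward measure) then supplies the integral identity used above. Two bookkeeping points deserve care. The solid angle is \emph{signed}, since $\cos\theta$ is negative wherever the surface faces away from $\mathbf{p}$, so $\gamma'_{r-1}$ may change sign and the identity must be read with signed measures. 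Moreover, the $4\pi$ normalization appearing in the displayed definition of $\gamma_\epsilon$ must be applied consistently with the final formula: carrying it through the substitution produces an overall factor $4\pi$, which I would absorb by adopting the unnormalized differential $d\gamma = ds_\bot/\eta^2$ used in the text preceding (\ref{eq_form_17t}), so that (\ref{eq_form_17A}) emerges exactly as stated.
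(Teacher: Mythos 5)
Your proof is correct and follows essentially the same route as the paper's: factor out $|\xi|^{-2}$ via the inverse-square scaling law so that $\phi(\zeta)=|\xi|^{-2}\phi(\mathrm{pmc}\pm\ii r)$, then collapse the surface integral to a 1D integral in $r$ by grouping boundary elements into infinitesimal spherical shells whose signed solid-angle measure has density $\gamma'_{r-1}$ against $dr$, with the $r^2$ factor arising from $ds_\bot=\eta^2\,d\gamma$. You additionally handle two points the paper treats only loosely --- the differentiability of $\gamma_\epsilon$ (which the paper defers to a footnote invoking Dirac deltas) and the $4\pi$ normalization mismatch between the displayed definition of $\gamma_\epsilon$ and the unnormalized $d\gamma=\eta^{-2}ds_\bot$ actually used in the substitution.
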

\begin{proof}
    Let $r := |\eta/\xi|$ to obtain $\zeta = (\xi + \ii \eta) = \xi(1 \pm \ii r) = |\xi|(\mathrm{pmc}(\mathbf{p}, S) \pm \ii r)$, then $\phi(\zeta) = \phi(\mathrm{pmc}(\mathbf{p}, S) \pm \ii r) / \xi^2$ as a result of the inverse-square decay property. Also $\cos \theta~ds = \eta^2~d\gamma$ where $d\gamma$ is the differential signed spatial angle by which the surface element is observed from the query point; resulting in the term $r^2 = (\eta/\xi)^2$.
    On the other hand, $r = |\eta/\xi|$ is constant over the differential surface encapsulated within the spherical shell $C_{\epsilon + \delta} (\mathbf{p}, S) - C_{\epsilon}(\mathbf{p}, S)$ of infinitesimal thickness $|\xi|\delta$ as $\delta \to 0^+$, letting the shell inner and outer radii be $r = (1 + \epsilon)$ and $r + dr = (1 + \epsilon + \delta)$ $\delta \to 0^+$. This differential surface area can be rewritten as the total differential $d\gamma = \gamma_{(r - 1) + dr} - \gamma_{(r - 1)}$. Noting that
    \begin{equation}
        \frac{\partial ~}{ \partial \epsilon} \gamma_\epsilon(\mathbf{p}, S) \big|_{\epsilon = (r - 1)} = \lim_{\delta \to 0^+} \frac{\gamma_{\epsilon} - \gamma_{\epsilon + \delta}}{\delta} \big|_{\epsilon = (r - 1)},
    \end{equation}
    the surface area is obtained as $\frac{\partial \gamma_\epsilon}{ \partial \epsilon} ~dr$. Thus the 2D integral over the surface is converted to a 1D integral over the interval $\epsilon = (r-1) \in [0, |\mu/\xi|-1]$ for which $C_\epsilon(\mathbf{p}, S) \neq \emptyset$ thus $\frac{\partial \gamma_\epsilon}{\partial \epsilon} \neq 0$, while the integral vanishes outside this interval. This completes the proof, assuming that the partial derivative (i.e., the above limit) exists.\footnote{It is possible for the $\gamma-$function to have sudden changes (e.g., when the surface has spherical patches centered at the query point), in which case it becomes non-differentiable with respect to $r = |\eta/\xi|$. However, the formulation remains valid if we use the Dirac delta function to handle differentiation at such discontinuities.}
\end{proof}
Substituting for the $\phi-$kernel from (\ref{eq_form_15}) in (\ref{eq_form_17A}) gives the following, noting that $\lambda(\zeta)$ is a function of $\xi(\mathbf{p}, S)$ only, hence fixed over the course of integration:
\begin{equation}
    \rho_\sigma(\mathbf{p}, S) = \frac{\lambda}{2\pi\sigma} \int_1^{|\frac{\mu}{\xi}|} \gamma_{r-1}' \frac{e^{-\frac{1}{2}(r - 1)^2/\sigma^2 - 2 \ii \tan^{-1}r}}{1 + r^{-2}}~ dr, \label{eq_form_17B}
\end{equation}
If $\gamma'_\epsilon$ is not bounded (e.g., at the center of perfectly spherical patches on the manifold) the integral is still convergent, but the aggregation abruptly changes as $(1+\epsilon)|\xi|$ passes through the radius of the spherical patch.

The truncated affinity function in (\ref{eq_form_17t}) can also be rewritten in terms of a 1D integral:
\begin{equation}
    \bar{\rho}_\sigma(\mathbf{p}, S) = \frac{\lambda}{2\pi\sigma} \int_1^{1 + \epsilon} \gamma_{r-1}' \frac{e^{-\frac{1}{2}(r - 1)^2/\sigma^2 - 2 \ii \tan^{-1}r}}{1 + r^{-2}}~ dr. \label{eq_form_17C}
\end{equation}
If $\sigma \ll 1$, the major contributions to the integral are over the subinterval $r \in [0, 1 + \epsilon]$, where $\epsilon = c\sigma$, which means over the contact region $C_\epsilon(\mathbf{p}, S)$ (see Figure \ref{figure9} (c) and (d)). The contributions from $r \in [1 + \epsilon, |\mu/\xi|]$ result in a negligible error, denoted earlier by $E_{\sigma, \epsilon}(\mathbf{p}, S)$. This is made precise as follows:
\begin{prop}
    If $\sigma \ll 1$, the truncation error $E(\mathbf{p}, S; \sigma) = \rho(\mathbf{p}, S; \sigma) - \bar{\rho}(\mathbf{p}, S; \epsilon)$ is bounded as follows:
    \begin{equation}
        |E| \leq E_m , \quad \text{if } \frac{\epsilon}{\sigma} \geq \left[ 2\log_e\left(\frac{|\lambda|}{2\pi \sigma} \frac{\gamma_m^+}{E_m} \right) \right]^{\frac{1}{2}}, \label{eq_form_18b}
    \end{equation}
    where $E = E_{\sigma, \epsilon}(\mathbf{p}, S)$, $E_m$ is the desired error upper-bound (arbitrarily chosen) and $\gamma_m^+ \geq \gamma_\infty(\mathbf{p}, S)$ is an upper-bound to the aggregated {\it unsigned} spatial angle of observation:
    \begin{equation}
        \gamma_m^+ \geq \int_1^{|\frac{\mu}{\xi}|} |\gamma_{r-1}'|~dr = \int_1^{|\frac{\mu}{\xi}|} \big|\frac{\partial~}{\partial \epsilon}\gamma_\epsilon \big|_{\epsilon = (r-1)} ~dr.
    \end{equation}
    The tightest error bound is obtained if $\gamma_m^+$ is picked equal to the right-hand side integral above (if known), while the most conservative choice is $\gamma_m^+ = 4\pi$, in which case the right-hand side of (\ref{eq_form_18b}) simplifies.
    Lastly, $\lambda \in \{+\lambda_1, -\lambda_2\}$ depending on $\mathrm{pmc}(\mathbf{p}, S)$ as a result of definition in {\rm (\ref{eq_form_16c})}.
\end{prop}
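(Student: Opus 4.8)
The plan is to recognize the truncation error as exactly the \emph{tail} of the one-dimensional integral and then bound it by a standard Gaussian-tail estimate. Subtracting (\ref{eq_form_17C}) from (\ref{eq_form_17B}), the two integrands are identical and the lower limits coincide, so the difference collapses onto the complementary interval:
\[ E = \frac{\lambda}{2\pi\sigma}\int_{1+\epsilon}^{|\frac{\mu}{\xi}|} \gamma_{r-1}' \, \frac{e^{-\frac{1}{2}(r-1)^2/\sigma^2 - 2\ii\tan^{-1}r}}{1+r^{-2}}~dr. \]
This tail is the object I would estimate, and the goal is to show it is dominated by $E_m$ once $\epsilon/\sigma$ exceeds the stated threshold.

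Next I would apply the triangle inequality for integrals and bound each factor of the integrand separately. The phase factor contributes unit modulus, $|e^{-\frac{1}{2}(r-1)^2/\sigma^2 - 2\ii\tan^{-1}r}| = e^{-\frac{1}{2}(r-1)^2/\sigma^2}$, since $-2\ii\tan^{-1}r$ is purely imaginary. On the domain $r \geq 1+\epsilon$ this Gaussian is monotone decreasing in $(r-1)$, so it is dominated by its value at the left endpoint, $e^{-\frac{1}{2}\epsilon^2/\sigma^2}$. The rational amplitude obeys $0 < (1+r^{-2})^{-1} \leq 1$ for every $r \geq 1$, so it can only help, and the constant $|\lambda|/(2\pi\sigma)$ factors out. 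Pulling the endpoint value of the Gaussian out of the integral leaves $\int_{1+\epsilon}^{|\mu/\xi|} |\gamma_{r-1}'|~dr$, which is bounded above by the total-variation quantity $\gamma_m^+$ introduced in the statement. Collecting these estimates yields
\[ |E| \leq \frac{|\lambda|}{2\pi\sigma}\,\gamma_m^+\, e^{-\frac{1}{2}\epsilon^2/\sigma^2}. \]

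To finish, I would impose $|E| \leq E_m$, divide, take logarithms (the right-hand side being positive and below unity in the regime of interest), and solve for $\epsilon/\sigma$; the inequality $\epsilon^2/\sigma^2 \geq 2\log_e(\tfrac{|\lambda|}{2\pi\sigma}\tfrac{\gamma_m^+}{E_m})$ rearranges precisely into the threshold (\ref{eq_form_18b}). Substituting the conservative value $\gamma_m^+ = 4\pi$ then simplifies the constant, as asserted.

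The one delicate point is that $\gamma_{r-1}'$ need not be bounded: as the footnote to the preceding Proposition notes, spherical patches centered at $\mathbf{p}$ produce jumps in $\gamma_\epsilon$ and hence Dirac contributions to its derivative. I expect this to be the only real obstacle, and I would handle it by reading $\int |\gamma_{r-1}'|~dr$ as the total variation of $\gamma_\epsilon$ over the relevant interval. Since $\gamma_m^+$ is \emph{defined} as an upper bound for exactly this total variation, and since the Gaussian and amplitude factors remain bounded pointwise (so the product is still integrable against a bounded-variation measure), the estimate goes through unchanged even in the presence of such discontinuities.
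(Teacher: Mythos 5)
Your proposal follows essentially the same route as the paper's own proof: identify $E$ as the tail integral over $r \in [1+\epsilon, |\mu/\xi|]$, apply the triangle inequality, bound $(1+r^{-2})^{-1} \leq 1$ and the Gaussian by its left-endpoint value $e^{-\frac{1}{2}(\epsilon/\sigma)^2}$, absorb $\int |\gamma_{r-1}'|\,dr$ into $\gamma_m^+$, and solve for $\epsilon/\sigma$. Your added remark on reading $\int|\gamma_{r-1}'|\,dr$ as a total variation to accommodate Dirac contributions is a sensible refinement consistent with the paper's own footnote, but it does not change the argument.
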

\begin{proof}
    The error is due to integration over $r \in [1 + \epsilon, |\frac{\mu}{\xi}|]$:
    \begin{align*}
        E &= \frac{\lambda}{2\pi\sigma} \int_{1+\epsilon}^{|\frac{\mu}{\xi}|} \gamma_{r-1}' \frac{e^{-\frac{1}{2}(r - 1)^2/\sigma^2 - 2 \ii \tan^{-1}r}}{1 + r^{-2}}~ dr, \\
        \Rightarrow |E| &\leq \frac{|\lambda|}{2\pi\sigma} \int_{1+\epsilon}^{|\frac{\mu}{\xi}|} \frac{ |\gamma_{r-1}'| }{1 + r^{-2}} ~e^{-\frac{1}{2}(r-1)^2/\sigma^2}~dr, \\
        \Rightarrow |E| &\leq \frac{|\lambda|}{2\pi\sigma} e^{-\frac{1}{2} (\frac{\epsilon}{\sigma})^2} \int_{1+\epsilon}^{|\frac{\mu}{\xi}|} |\gamma_{r-1}'| ~dr = \frac{|\lambda|}{2\pi\sigma} e^{-\frac{c^2}{2}} \gamma_m'.
    \end{align*}
    Noting that $r \geq (1 + \epsilon) \geq 1$ and $(1 + r^{-2})^{-1} \leq 1$ and $e^{-\frac{1}{2}(r-1)^2/\sigma^2} \leq e^{-\frac{1}{2}(\frac{\epsilon}{\sigma})^2}$. Equating the right-hand side to $E_m$, and solving for $c = \epsilon/\sigma$ yields (\ref{eq_form_18b}).
\end{proof}

Using (\ref{eq_form_17C}) instead of (\ref{eq_form_17B}) at a small cost bounded by (\ref{eq_form_18b}) enables the algorithm to perform the integration only over the $\epsilon-$ANNs of the query point on the manifold and neglect the contribution of the distant elements, significantly speeding up the implementation explained in Section \ref{sec_implement} with little compromise in the accuracy.

The relationship between the SDF and the conventional notions of shape skeleton (such as medial axis or its variants) becomes more intuitive when $\sigma \rightarrow 0^+$, in which case the Gaussian bell approaches the Dirac delta function: $\lim_{\sigma \rightarrow 0^+} g_\sigma(x) = \delta(x)$. Consequently, the integrand in (\ref{eq_form_17}) vanishes except when $\zeta(\mathbf{p}, \mathbf{q}, S)$ lies along the bisector $\eta = |\xi|$ in the complex plane. The integral thus degenerates to one over the contact region $C_0(\mathbf{p}, S)$, which is generally composed of $0-$, $1-$ and $2-$ dimensional regions. This means that the support of the SDF shrinks to a subset of the medial axis, as $\sigma \rightarrow 0^+$.
A rigorous analysis of the degenerate SDF and its relationship with the medial axis, especially reasoning about the dimensionality of $C_0(\mathbf{p}, S)$, is beyond the scope of this article.

\begin{table}
    \caption{The dominant overlaps for shape complementarity} \label{table_1}
    \begin{tabular}{p{2.0cm} p{2.0cm} p{2.8cm}}
        \hline \\[-2ex]
        $\mathbf{p}$ {\rm versus} $S_1$ & $\mathbf{p}$ {\rm versus} $\tau S_2$ & {\rm Score contribution} \\
        \\[-2ex] \hline
        $\mathbf{p} \in \mathbf{e}(S_1)$ & $\mathbf{p} \in \mathbf{i}(\tau S_2)$ & $\propto (\mp \ii\lambda_1) (\pm \ii\lambda_2)$ \\
        (high-density) & (high-density) & $= +\lambda_1 \lambda_2 = +\mathcal{O}(1)$ \\
        \hline
        $\mathbf{p} \in \mathbf{i}(S_1)$ & $\mathbf{p} \in \mathbf{e}(\tau S_2)$ & $\propto (\pm \ii\lambda_2) (\mp \ii\lambda_1)$ \\
        (high-density) & (high-density) & $= +\lambda_2 \lambda_1 = +\mathcal{O}(1)$ \\
        \hline
        $\mathbf{p} \in \mathbf{i}(S_1)$ & $\mathbf{p} \in \mathbf{i}(\tau S_2)$ & $\propto (\pm \ii\lambda_2) (\pm \ii\lambda_2)$ \\
        (high-density) & (high-density) & $= -\lambda_2^2 = -\mathcal{O}(\mathfrak{p})$ \\
        \hline
        $\mathbf{p} \in \mathbf{e}(S_1)$ & $\mathbf{p} \in \mathbf{e}(\tau S_2)$ & $\propto (\mp \ii\lambda_1) (\mp \ii\lambda_1)$ \\
        (high-density) & (high-density) & $= -\lambda_1^2 = -\mathcal{O}(\mathfrak{p}^{-1})$ \\
        \hline
    \end{tabular}
\end{table}
\begin{figure}
    \centering
    \includegraphics[width=0.48\textwidth]{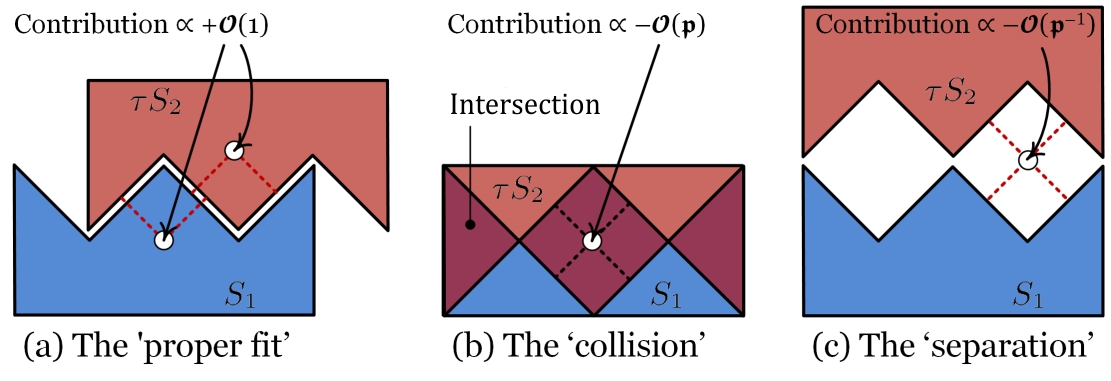}
    \caption{The spatial relations between the assembly partners and the contributions of high-density points, listed in Table \ref{table_1}.} \label{figure10}
\end{figure}

There is another reason behind the inverse-square form of the complex function $\phi_P(\zeta) = 1/(\sqrt{2\pi} \zeta^2)$, which is the resulted phase shift $\angle \phi = -2 \angle \zeta$. This in turn, together with an assignment of different coefficients $+\lambda_1 \neq -\lambda_2$ assigned to exterior and interior, respectively, results in phase and amplitude separations that are useful for a meaningful scoring scheme, explained here in simple terms using the schematics in Figure \ref{figure10}.
\begin{itemize}
    \item If $\mathbf{p}$ is an exterior point of $S$ (denoted $\mathbf{p} \in \mathbf{e}(S)$ thus $\xi > 0$), with high medial density, the complex spread $\zeta(\mathbf{p}, \partial S, S)$ is highly populated around $\angle \zeta = \pm \pi/4$.
    \item If $\mathbf{p}$ is an interior point of $S$ (denoted $\mathbf{p} \in \mathbf{i}(S)$ thus  $\xi < 0$), with high medial density, the complex spread $\zeta(\mathbf{p}, \partial S, S)$ is highly populated around $\angle \zeta = \pm 3\pi/4$.
\end{itemize}
Therefore, $\angle \phi \approx \mp \pi/2$ and $|\phi| = \lambda_1 \phi_M \phi_P$ for external points and $\angle \phi \approx \mp 3\pi/2 \equiv \pm \pi/2$, and $|\phi| = \lambda_2 \phi_M \phi_P$ for internal points near the medial axis---assuming the same intensities for the $\phi_M$ and $\phi_P$ terms.
The two cases differ by a phase difference of $\pi$ and amplitude ratio of $\propto \lambda_2/\lambda_1$. If $\sigma \ll 1$, the medial components become dominant, hence the affinity functions obtained for the two cases by accumulating the $\phi(\zeta)$ values in (\ref{eq_form_13}) are expected to have similar relationships, i.e., $\rho(\mathbf{p}, S) \propto \mp \ii \lambda_1$ for the high-density exterior points, while $\rho(\mathbf{p}, S) \propto \pm \ii \lambda_2$ for the low-density interior points.

Given two solids $S_1$ and $S_2$ and a configuration $\tau \in \mathrm{SE}(3)$, the three types of overlaps that dominate the shape complementarity score in (\ref{eq_form_8}) are depicted in Table \ref{table_1}, assuming $\lambda_1 = \mathcal{O}(\mathfrak{p}^{-1})$ and $\lambda_2 = \mathcal{O}(\mathfrak{p})$, where $\mathfrak{p}>1$ is referred to as the `penalty factor'. Notice that exterior-interior overlaps impart a {\it reward} of $+\mathcal{O}(1)$ (proper fit), while interior-interior overlaps add a large {\it penalty} of $-\mathcal{O}(\mathfrak{p})$ (collision), and the exterior-exterior overlaps add a small penalty of $-\mathcal{O}(\mathfrak{p}^{-1})$ (separation), illustrated in Figure \ref{figure10}.


\section{Implementation} \label{sec_implement}

There are various possible geometric representations including but not limited B-reps. Our method is independent of the choice of representation scheme, as long as it supports integration of the $\phi-$kernel over the boundary. Here we present the implementation for triangulated B-reps given in the form of triangular tessellations.

Given a mesh approximation $\Delta_n(S)$ of the $2-$manifold boundary $\partial S$ of a solid $S \in \mathcal{P}_\ast(\mathds{R}^3)$ composed of $n$ triangles, the $\zeta-$map and affinity integral can be computed for a given query point $\mathbf{p} \in \mathds{R}^3$ in $O(n)$ basic steps, having precomputed and tabulated the integral in (\ref{eq_form_17B}) for parameterized geometry of an arbitrary triangle and its vertex distances against a point, normalized by the minimum distance $|\xi(\mathbf{p}, S)|$. The evaluation can be carried out on a uniform 3D grid $G_m$ of $m$ sample points, which requires a total time of $O(mn)$. However, we showed in Section \ref{sec_form} that the affinity function can be approximated by the truncated integral over the subset of triangles $\Delta_s(\mathbf{p}) \subset \Delta_n$ that carry at least one vertex that is an $\epsilon-$approximate nearest neighbor to the query point $\mathbf{p} \in G_m$. Hence there is a possibility of speeding the algorithm up by leveraging efficient all-approximate nearest neighbor search algorithms.

Given the B-reps of two assembly partners $S_1$ and $S_2$, each as a mesh $\Delta_{n_1}(S_1)$ and $\Delta_{n_2}(S_2)$, and their affinities computed over the uniform grids $G_{m_1}$ and $G_{m_2}$, a cascade evaluation of the convolution in (\ref{eq_form_8}) for a finite collection of transformations $T \subset \mathrm{SE}(3)$, takes $O(m|T|)$ time, assuming $m_{1,2} = O(m)$. This can be improved by the decomposition $T = T_r \times T_t$ where $T_r \in \mathrm{SO}(3)$ is the rotation set and $T_t \in \mathds{R}^3$ is the translation set. For each rotation $\tau_r \in T_r$, an FFT can be used for cumulative evaluation of the convolution for all $\tau_t \in T_t$ in $O(m \log |T_t|)$ assuming that $|T_t| \ll m$. The total time for evaluating all $|T| = |T_r| |T_t|$ poses then becomes $O(m |T_r| \log |T_t|)$, which is a significant improvement over the cascade evaluation time of $O(m |T_r| |T_t|)$.

It is noticed that the SDF-based affinity function has higher variations around the points that are closer to the shape skeleton, hence the equispaced sampling on the uniform grid $G_m$ can be replaced with a nonequispaced sampling on a more efficient tree-based data structure, such as an octree $Q_{m'}$. The affinity data is adaptively sampled on the oct-tree, with deeper subdivisions at the high skeletal density regions, resulting in a significant reduction of space- and time-complexities due to $m' \ll m$. However, the traditional FFT algorithm \cite{Cooley1965} needs to be replaced with a grid-free nonequispaced FFT algorithm, such as the one introduced in \cite{Potts2001}, resulting in an arithmetic complexity of $O(\alpha m' |T_r| \log (\alpha |T_t|))$ for the oversampling factor $\alpha > 1$ (see \cite{Potts2001} for implementation details), which outperforms the grid-based FFT as long as $(m'/m) < \alpha$.


\section{Validation} \label{sec_validation}

\begin{figure}
    \centering
    \includegraphics[width=0.48\textwidth]{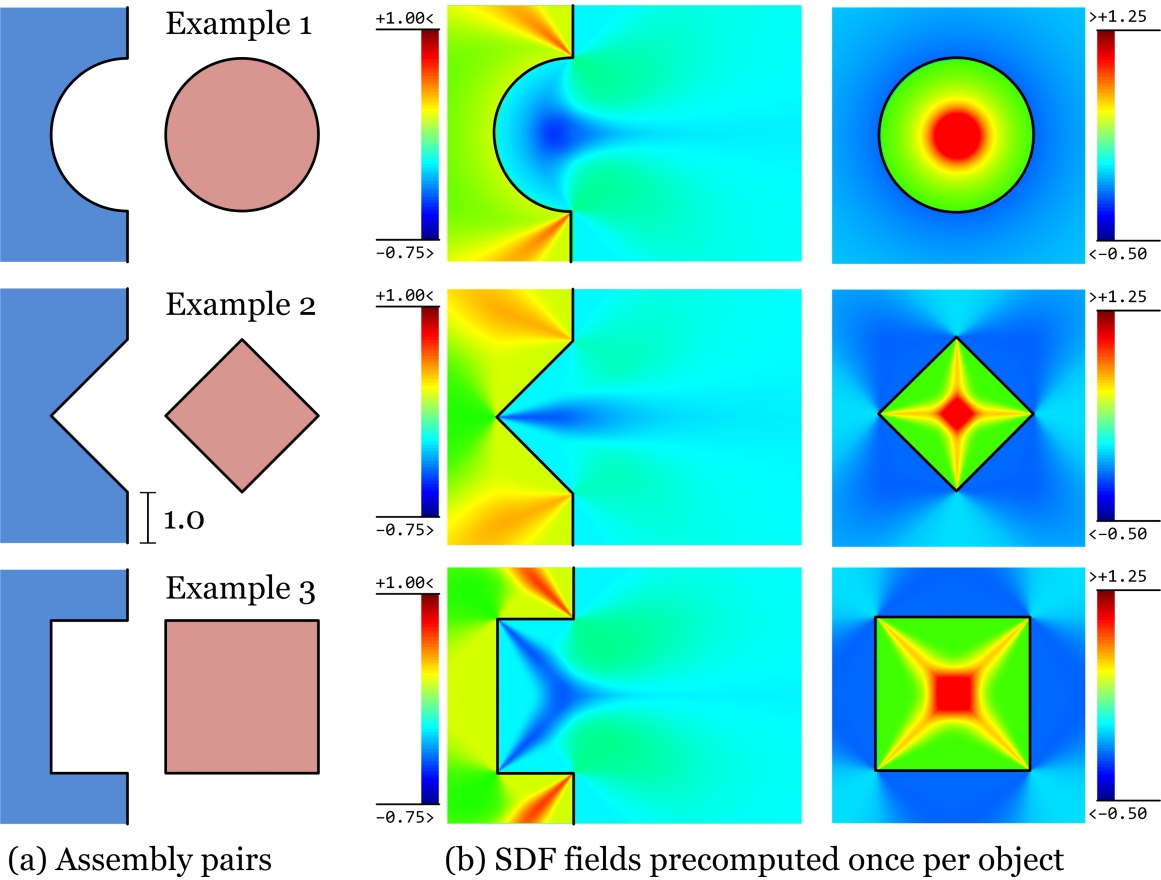}
    \caption{3 pairs of simple 2D solids are considered; the SDFs are precomputed offline, and used repeatedly to evaluate the shape complementarity score for different motions.} \label{figure11}
\end{figure}
\begin{figure}
    \centering
    \includegraphics[width=0.48\textwidth]{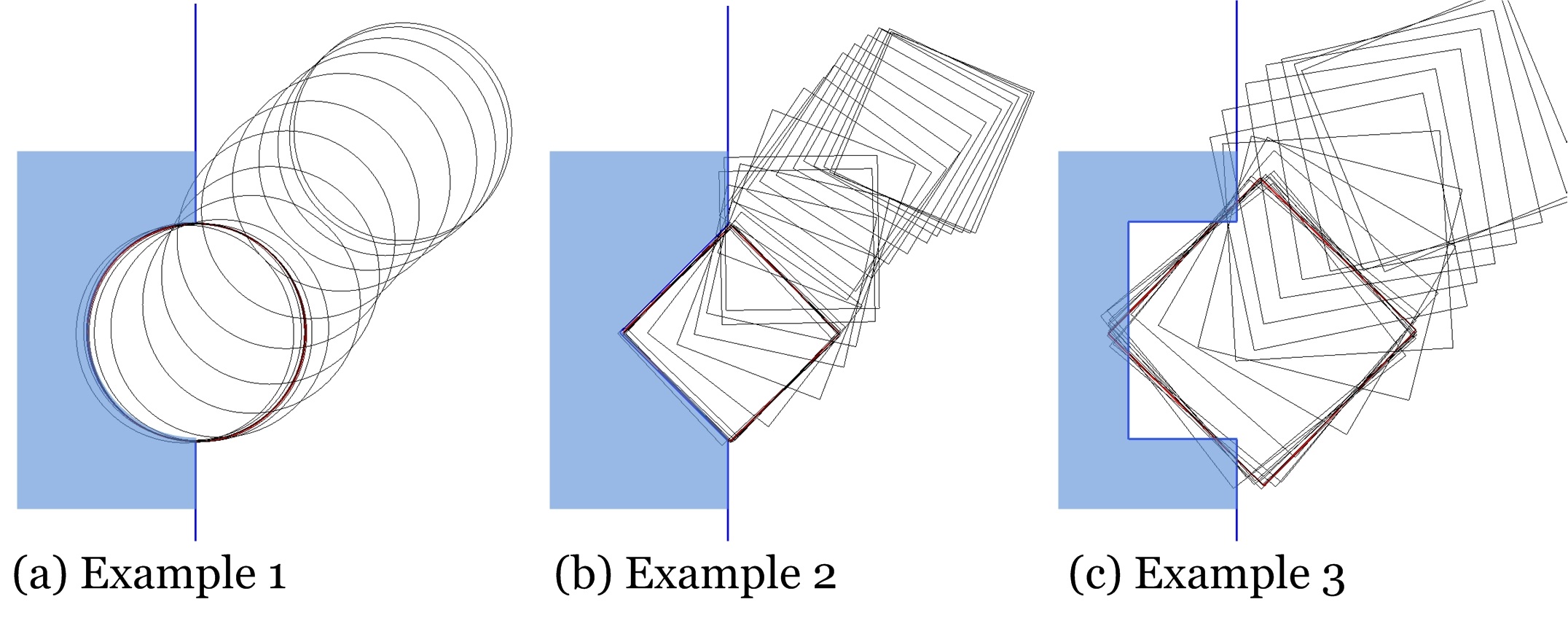}
    \caption{Snapshots of a dynamic simulation using the virtual geometric force-model $\mathbf{F}(\tau) = -\nabla E(\tau) \propto f(t, S_1, S_2)$. The moving object may get trapped in a local energy well, as in (c).} \label{figure12}
\end{figure}

The search for a near-optimal assembly configuration can be addressed as an optimization of the score function. To investigate the general properties of the affinity function in practice, we first consider 3 simple 2D example, shown in Figure \ref{figure11} (a). The circular arcs are approximated with line segments of at most $0.1$ length units, and the affinity integral in (\ref{eq_form_17C}) is numerically computed with an angular precision of $\delta \gamma = \eta^{-1} \cos \theta ~\delta r \leq 10^{-4}$ using Riemann sums. The affinity is computed over a uniform grid of query points with a resolution of $0.05$ units of length, using the coefficients $\lambda_1 = 1$, $\lambda_2 = 3$, a thickness factor of $\sigma = 0.5$ and a generous truncation factor of $\epsilon = 3\sigma = 1.5$. The resulting SDF affinity fields $\rho_{1,2}(\mathbf{p})$, plotted in Figure \ref{figure11} (b), are convolved for different motions $\tau \in \mathrm{SE}(2)$ represented by the 3-tuple $(x, y, \vartheta)$, where $(x,y)$ is the coordinates of the moving solid's centroid, and $\vartheta$ is its right-handed rotation around the positive $z-$axis.

\begin{figure}
    \centering
    \includegraphics[width=0.48\textwidth]{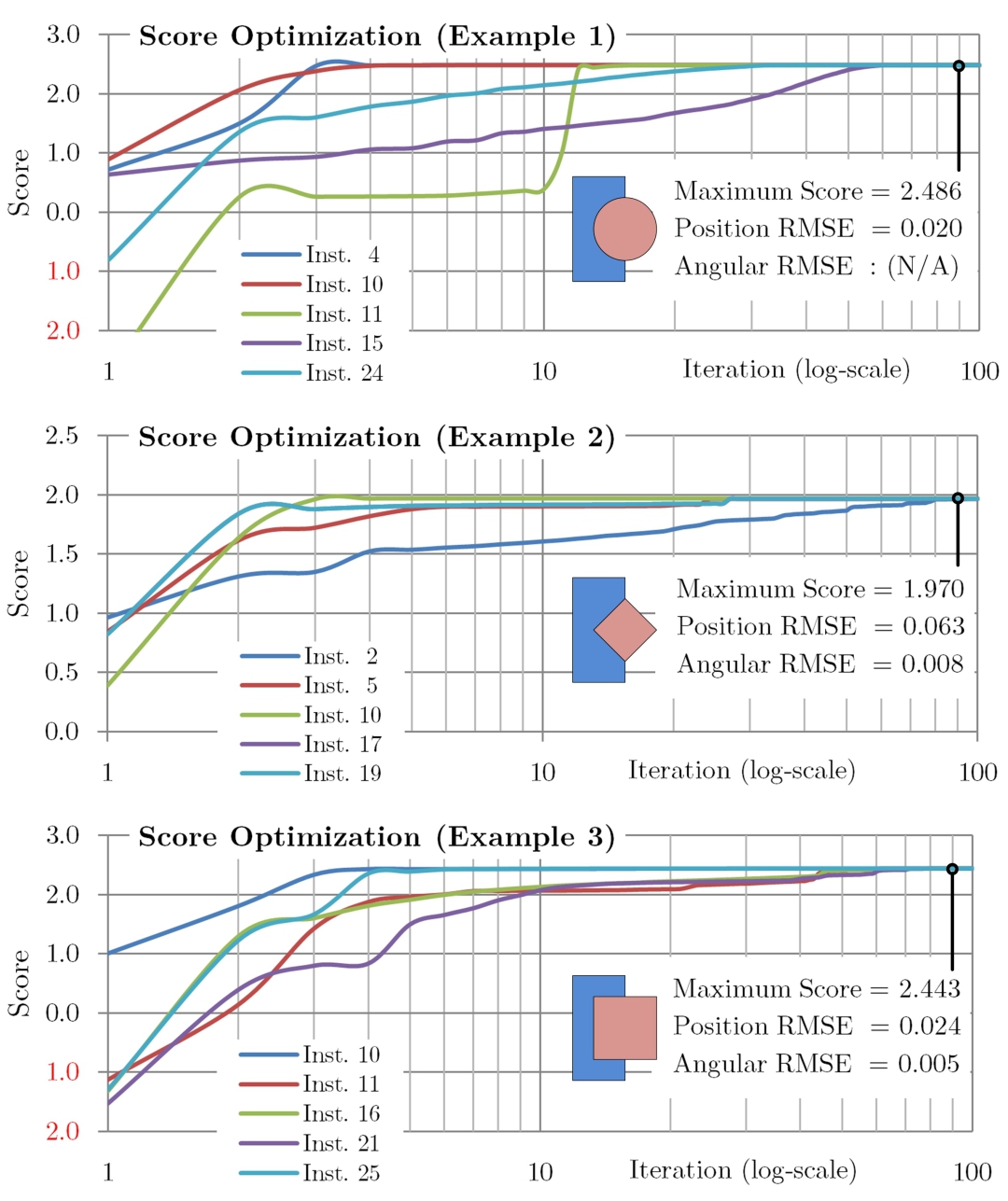}
    \caption{Multiple CG optimization progress of correct pairings, showing 5 instances out of 25 random starting configurations with highest local maxima.} \label{figure13}
\end{figure}

To validate the effectiveness of the continuous score function as a virtual energy-analogue in real-time applications, we first perform a rigid body dynamics simulation, assuming a potential force-field $\mathbf{F}(\tau) = -\nabla E(\tau)$ where $E(\tau) \propto -f(\tau, S_1, S_2)$ imposed by the stationary object $S_1$ on the moving object $S_2$, presented in Figure \ref{figure12} for a particular setting of inertia and damping coefficients. Such a virtual energy formulation conceives a purely geometric force-model, particularly useful for real-time applications such as semi-automatic solid assembly or user-interactive protein docking (e.g., using haptic devices). As expected from the formulation, the proximal component of affinity drags the moving object towards the binding site, where the medial component adjusts the orientation to maximize proper skeletal overlap. However, Figure \ref{figure12} (c) depicts the possibility of getting trapped in local minima where there is no user interference, as is inevitable with any gradient-based optimization as well. To avoid this, a collection of $|T_0| = 25$ random initial configurations $\tau_0 \in T_0$ are uniformly sampled in the range $(x, y, \vartheta)_0 \in [-2.5, +2.5]^2 \times [-\pi/4, +\pi/4]$. For different possible pairing of the parts in Figure \ref{figure11} (a), a conjugate-gradients (CG) maximization of the score $f(\tau)$ is carried out for $100$ iterations, starting from different initial configuration of the moving part,\footnote{This suffices for the simple 2D examples under consideration. However, a credible Monte Carlo search typically requires a larger sample size, and possibly larger number of iterations, in the case of more complicated shape features. For those cases, genetic algorithms can be used, assisted with FFT-based cumulative evaluation for large populations of samples in the configuration space, which will be presented elsewhere.} using a 2-point central-difference method for computing the gradients, with $\delta x = \delta y = 0.01$ units of length, and $\delta \vartheta = 0.01$ radians. For each example, the top $20\%$ (i.e., $5$ solutions out of $25$ which converged to the largest local maxima) are picked for inspection, the progress of which are plotted in Figure \ref{figure13} for correct partners, and in Figure \ref{figure14} for cross-pairings. For the proper pairings, the solution is within a translational RMSE of $\leq 0.024-0.063$ units of length (comparable to the grid resolution, hence reasonable) and a rotational RMSE of $\leq 0.002-0.008$ radians ($0.30^\circ-0.47^\circ$), of the correct assembly configuration known from visual perception. The RMSEs can exceed $0.15-0.32$ unit lengths and $0.17$ radians ($\sim 10^\circ$) for incorrect pairings, and is clearly correlated to the drops in the optimal scores, which are lower by at least $20-25\%$ compared to that of the correct pairing of the same constituents. Therefore, the method is useful in feature recognition and classification by comparing pairwise correlations.

Next, the effects of the SDF parameters on the score function and the RMSE are investigated with the help of Example 3 at the near-optimal configuration $\tau^\ast \approx \tau_0$, where $\tau_0$ is picked at the true global maximum of the score function (known by visual perception), and $\tau^\ast$ is obtained with 10 CG steps meant for relaxation, each time with a different set of SDF parameters. Figure \ref{figure15} (a) depicts the impact of the penalty factor $\mathfrak{p} = \lambda_2 / \lambda_1$, keeping $\lambda_1 \lambda_2$ constant. As expected from the definition, $f(\tau^\ast)$ does not change significantly with $\mathfrak{p} \geq 3.0$ at the collision-free configuration $\tau^\ast$ (see Table \ref{table_1} for an explanation). In addition, the translational RMSE decreases with increasing $\mathfrak{p}$, and remains in the range of $0.005-0.015$ units of length, which ensures the objectivity of the solution with respect to the choice of penalty ratio. However, the sensitivity of the score function to changes in $\tau$ increases with $\mathfrak{p}$, which can be characterized by the increase in $|\nabla^2 f(\tau^\ast)|$ (not shown here), due to higher intolerance to collision. $\mathfrak{p} \gg 1$ strictly prohibits collision, while slight penetrations might be allowed with $\mathfrak{p} \approx 1$ in the case of soft assembly (explained in Section \ref{sec_method}), hence the penalty factor is useful in tuning the flexibility of the assembly correlations, without significantly disturbing the optimal configuration.

\begin{figure}
    \centering
    \includegraphics[width=0.48\textwidth]{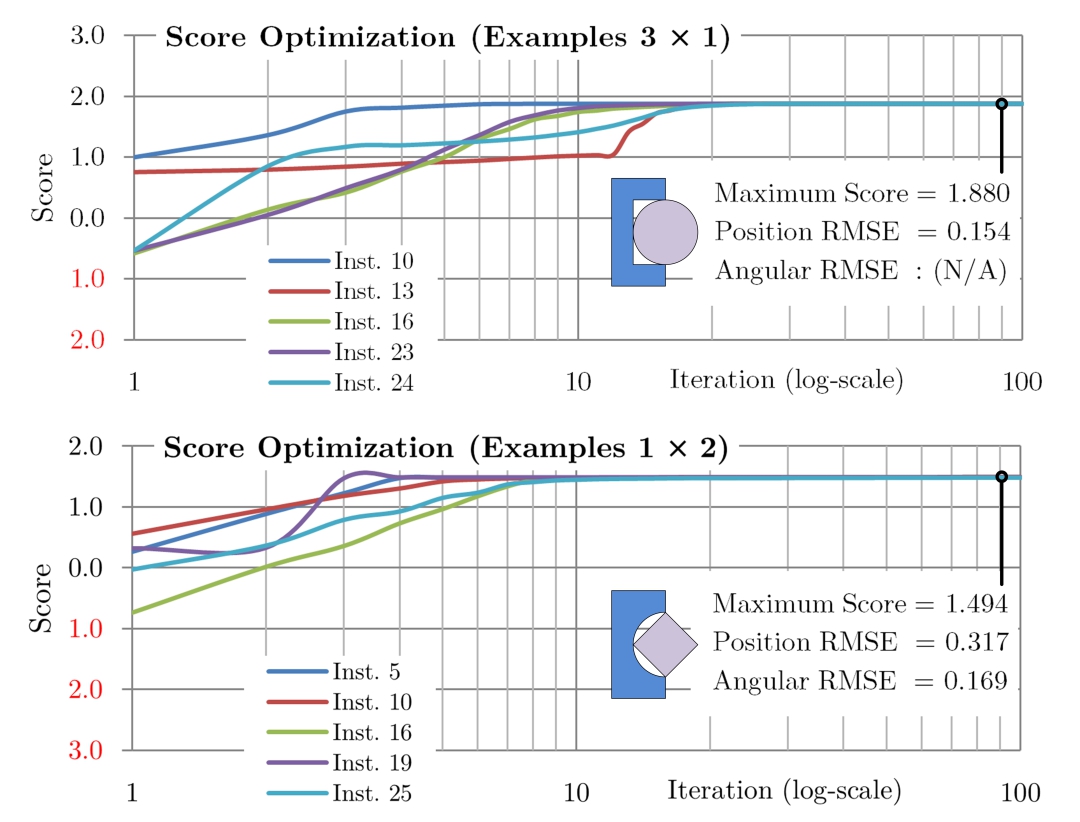}
    \caption{Multiple CG optimization progress of cross-pairings, showing 5 instances out of 25 random starting configurations with highest local maxima.} \label{figure14}
\end{figure}
\begin{figure}
    \centering
    \includegraphics[width=0.48\textwidth]{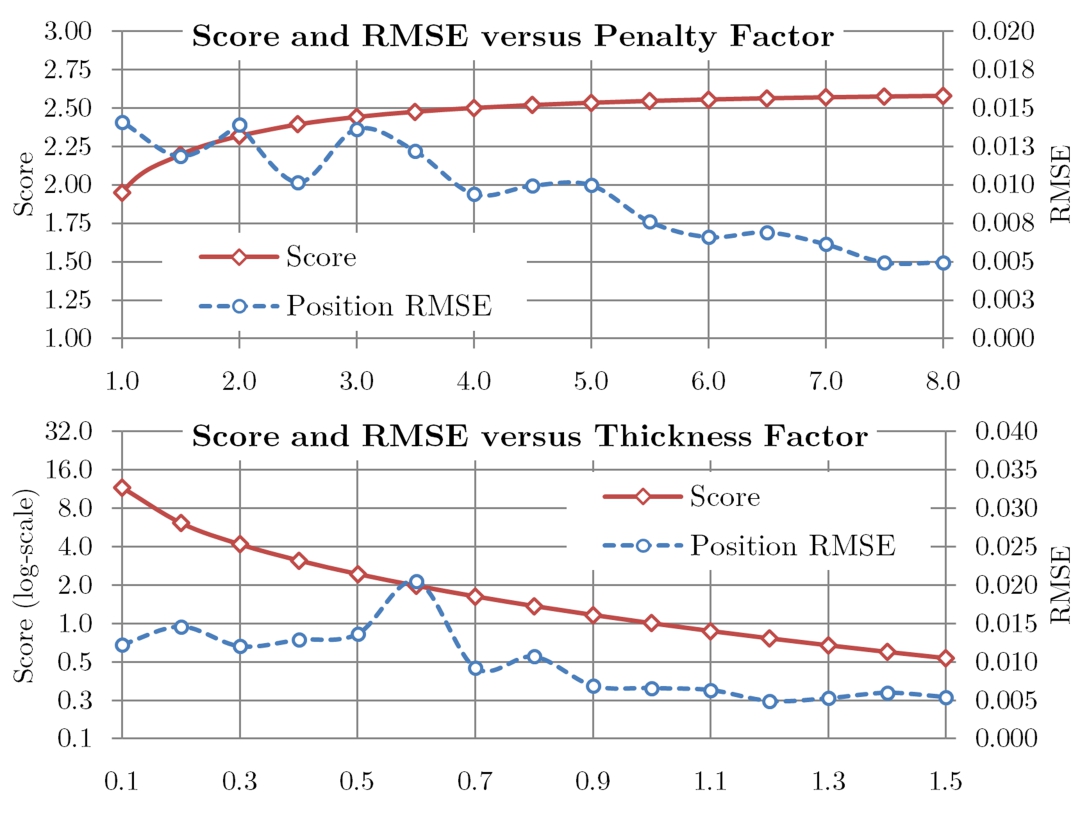}
    \caption{The effects of the SDF parameters $p = \lambda_2/ \lambda_1$ and $\sigma$ on the score and the RMSE.} \label{figure15}
\end{figure}

The central parameter to the medial effects in SDF is the thickness factor $\sigma$, the variation of which is investigated in Figure \ref{figure15} (b). As a result of definition, increasing $\sigma$ makes the Gaussian kernel sharper and more locally supported, decreasing the thickness of the medial regions. The result is higher sensitivity to deviations from optimal skeletal overlap, validated by the observed sharp increase in $f(\tau^\ast)$ with decreasing $\sigma$. Letting $\sigma \ll 1$ results in narrow and high-density medial regions, and allows precision assembly at the expense of a higher resolution requirement for the convolution, while $0.5 \leq \sigma$ allows low-resolution docking for imperfect or flexible contact. We argue that this provides a means to adjust the geometric {\it specificity} of assembly features, which can be defined as the differences in geometric {\it affinities} between possible pairings, i.e., the change in the complementarity score when one partner is replaced with another.\footnote{The terminology is borrowed from protein-protein interaction energetics, where specificity is critical for proper molecular recognition and binding.} As was the case with the penalty factor, the choice of the thickness factor has little effect on the position of the near-optimal solution $\tau^\ast$, noting that the RMSE remains in the range of $0.005-0.020$ over one order of magnitude change in the score $f(\tau^\ast) \sim 1/\sigma$. Therefore, the SDF parameters can be regulated to address the particular application needs (precision versus approximate assembly, rigid versus soft protein docking, broadening/narrowing the optimization pathway, etc.) without destabilizing the near-optimal solution.

\begin{figure}
    \centering
    \includegraphics[width=0.48\textwidth]{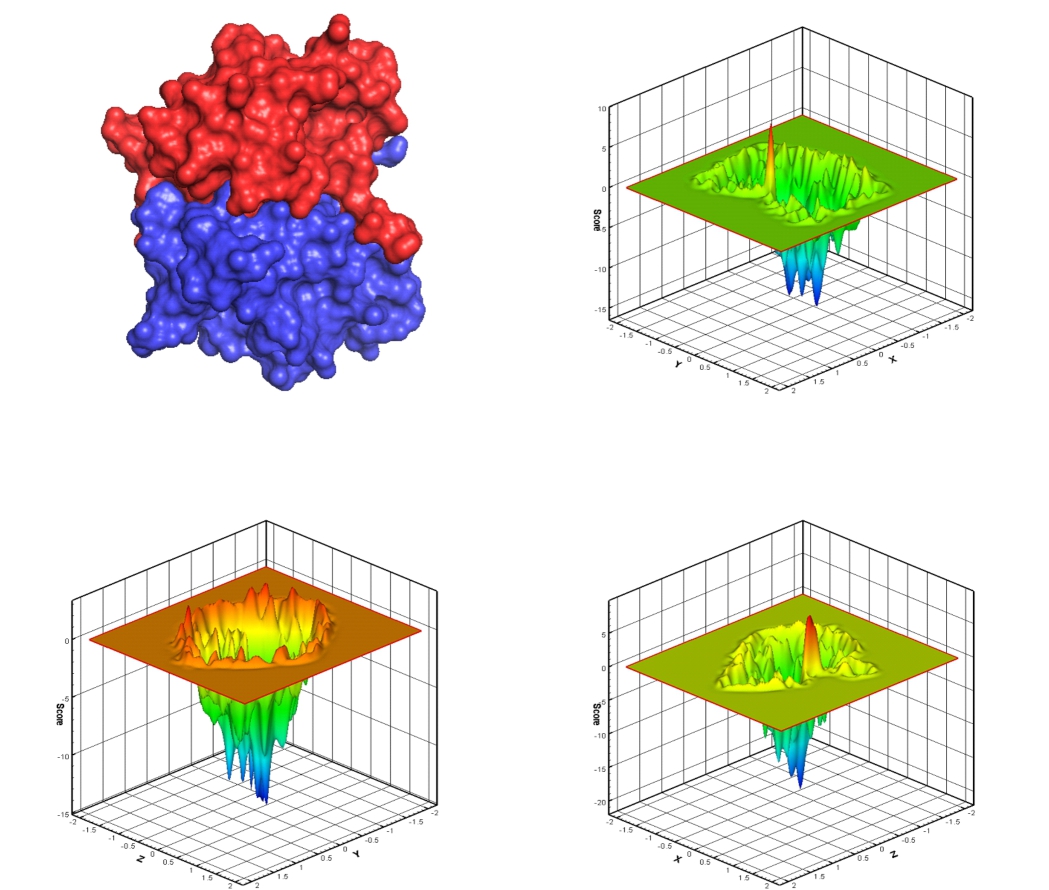}
    \caption{DSL-based shape complementarity score function variations for a pair of proteins (namely, NTF2 and GTPase Ran) (PDB Code: 1A2K) \cite{Stewart1998}.} \label{figure16}
\end{figure}
\begin{figure}
    \centering
    \includegraphics[width=0.48\textwidth]{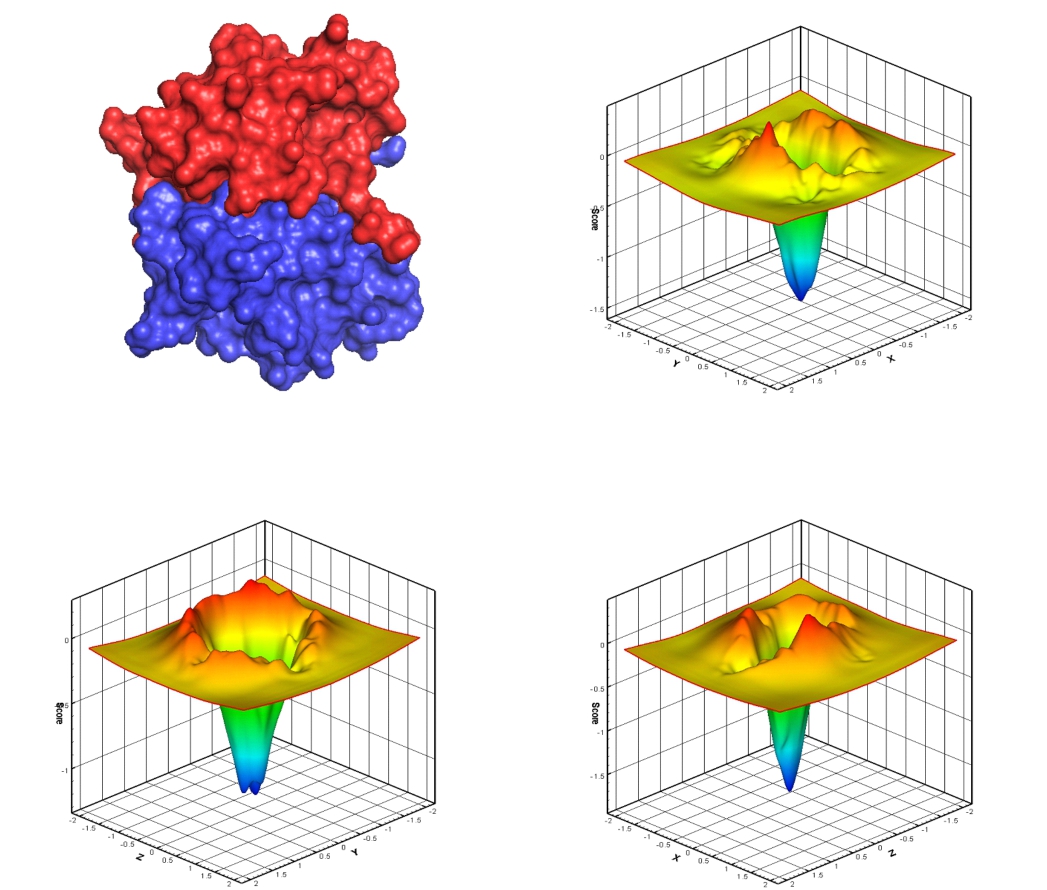}
    \caption{SDF-based shape complementarity score function variations for a pair of proteins (namely, NTF2 and GTPase Ran) (PDB Code: 1A2K) \cite{Stewart1998}.} \label{figure17}
\end{figure}

The next step is to validate the results for real 3D examples in solid assembly and protein-docking, the likes of which were shown in Figures \ref{figure1} and \ref{figure2} of the introduction.

Figures \ref{figure16} and \ref{figure17} plot the variations of the shape complementarity score function using the cross-correlation formula in (\ref{eq_form_8}) for a pair of protein domains taken from the X-ray crystallography of the protein complex formed by the nuclear transport factor 2 (NTF2) and the GTPase Ran (PDB Code: 1A2K) \cite{Stewart1998}. The plots show the shape complementarity score function changes for translations along $xy-$, $yz-$ and $zx-$ planes, for a fixed relative orientation (same as the native configuration). Figure \ref{figure16} uses the DSL-based formulation of the density function in Section \ref{sec_method} while Figure \ref{figure17} uses our SDF-based formulation. The cross-correlation of both density functions accurately predicts the best fit at the native configuration. However, the former gives a score function of relatively narrow support with undesirable sharp spikes that make the search for the global optimum challenging. On the other hand, our formulation has the advantage of having a nonzero score function over a larger subspace of the configuration space, as well as much smoother variations that are required for gradient-based optimization (e.g., CG-like methods).

Figure \ref{figure18} similarly illustrates the variations of our SDF-based score function for a mechanical assembly with nontrivial matching features as well as sharp and dull corners. Once again, the best fit configuration is predicted correctly irrespective of geometric complexity of the interacting surface patches, whose detection and matching using feature-based heuristics is otherwise challenging.

\begin{figure}
    \centering
    \includegraphics[width=0.48\textwidth]{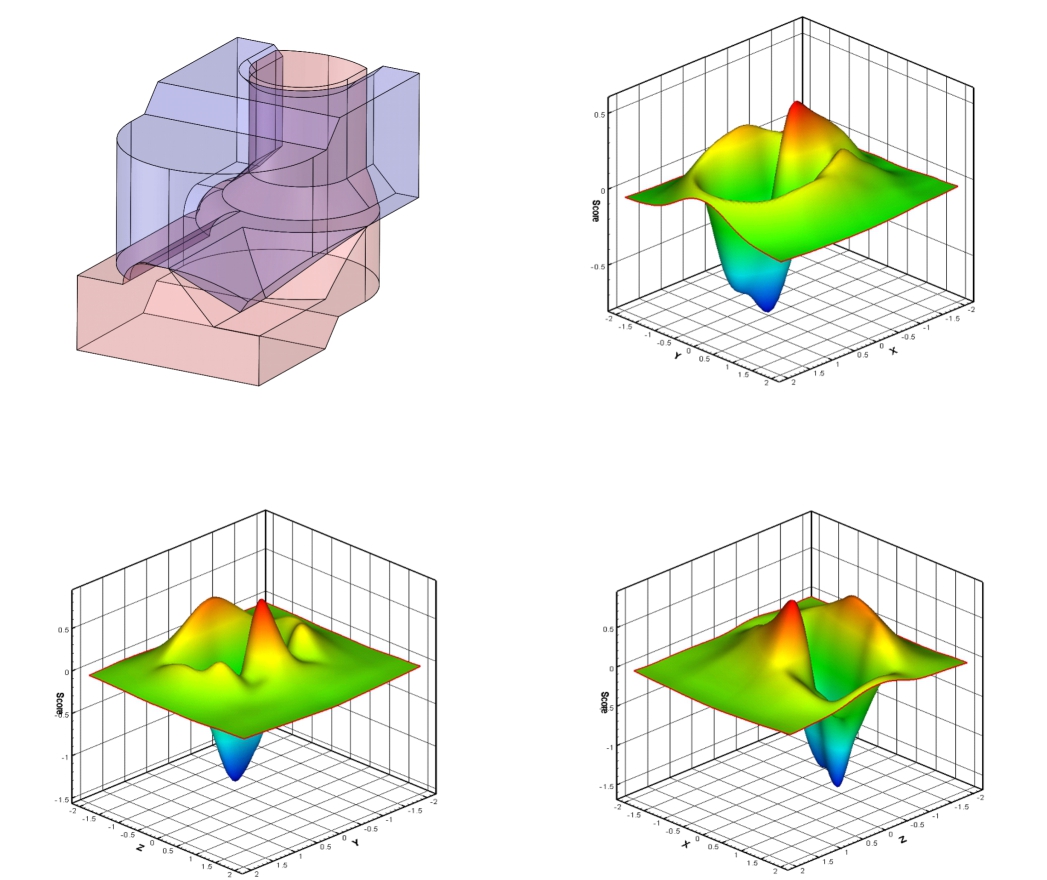}
    \caption{SDF-based shape complementarity score function variations for a pair of mechanical parts bounded by $C^1-$discontinuous algebraic surfaces.} \label{figure18}
\end{figure}
%


\section{Conclusion} \label{sec_conclusion}

A broad range of problems in geometric computing are dependent on a quantification of shape complementarity, which despite being intuitive to human perception for the most simple objects, is difficult to identify for general shapes. As we discussed in Section \ref{sec_method} in the context of the widely studied protein-docking problem, the existing techniques are mostly equipped with application-oriented heuristics, over-simplifications of the shapes, or restrictive assumptions on the building blocks in the first place. In Section \ref{sec_form} we proposed a theoretical framework to address the shape complementarity problem for objects of arbitrary shape in 2D and 3D, which can be extended in principle to higher dimensional spaces.
We argued that the first step in the common paradigm of convolution-based score-and-search, which is the proper formulation of the affinity function, essentially requires a projection of the shapes into a lower-dimensional space (i.e., the complex plane). We decomposed this process into two independent steps, first transforming the shape representation into an equivalent form in terms of Euclidean distances from the boundary elements to different observation points, projected to the complex plane ($\zeta-$mapping), followed by integrating a kernel $\phi: \mathds{C} \rightarrow \mathds{C}$ that extracts the relevant features of distance distribution to complementarity analysis. We argued that ostensibly different approaches to formulate the affinity function can be unified under the same conceptualization, differences being attributed to their choices of the $\phi-$kernel.

Furthermore, we proposed a particular kernel formulation that subsequently led to an implicit redefinition of the shape skeleton as a space-continuous density function, to which we referred as the skeletal density function (SDF). The $\phi-$kernel was devised to impart two effects into the SDF; namely, a proximal effect due to the inverse-square component, the choice of which appeared natural for a density function in 3D supported by the discussion on spatial angles; and a medial component that assigns higher densities to the query points with more extensive $\epsilon-$approximate nearest neighbor region on the boundary. We demonstrated in Section \ref{sec_validation} that the proposed score model is consistent with our perception of a proper fit; it favors alignment of complementary sharp features, while avoids collision or separation, hence appears a strong candidate to replace the heuristics for algorithm design. We also showed that the branched topologies of the support region of the overlapped SDFs provide a pathway for the optimization algorithm (i.e., a low-energy valley, making the analogy between score maximization and energy minimization) with relevant local clues towards the near-optimal solution. This in turn enables the simplest gradient-based algorithms to converge in a few number of iterations, as validated in Section \ref{sec_validation}.

In order to make the framework available for use in user-interactive real-time applications, more efficient data structures and algorithms need to be exploited, including all-approximate nearest neighbor search algorithms for faster computation of affinities, systematic adaptive sampling techniques, and nonequispaced FFTs for cumulative evaluation of optimization trials.


\bibliographystyle{plain}
\protect\footnotesize \bibliography{CDL-TR-14-01}

\begin{thebibliography}{10}

\bibitem{Agarwal2006}
Pankaj~K. Agarwal, Herbert Edelsbrunner, John Harer, and Yusu Wang.
\newblock Extreme elevation on a 2-manifold.
\newblock {\em Discrete \& Computational Geometry}, 36(4):553--572, 2006.

\bibitem{Aldaye2009}
Faisal~A. Aldaye, Pik~Kwan Lo, Pierre Karam, Christopher~K. McLaughlin, Gonzalo
  Cosa, and Hanadi~F. Sleiman.
\newblock Modular construction of {DNA} nanotubes of tunable geometry and
  single-or double-stranded character.
\newblock {\em Nature Nanotechnology}, 4(6):349--352, 2009.

\bibitem{Attali2009}
Dominique Attali, Jean-Daniel Boissonnat, and Herbert Edelsbrunner.
\newblock Stability and computation of medial axes: A state-of-the-art report.
\newblock In {\em Mathematical foundations of scientific visualization,
  computer graphics, and massive data exploration}, pages 109--125. Springer,
  2009.

\bibitem{Bajaj2011}
Chandrajit~L. Bajaj, Rezaul Chowdhury, and Vinay Siddahanavalli.
\newblock {F2D}ock: Fast {F}ourier protein-protein docking.
\newblock {\em IEEE/ACM Transactions on Computational Biology and
  Bioinformatics (TCBB)}, 8(1):45--58, 2011.

\bibitem{Bajaj2006}
Chandrajit~L. Bajaj and Vinay Siddavanahalli.
\newblock {F2D}ock: A fast and {F}ourier based error-bounded approach to
  protein-protein docking.
\newblock Technical report, 2006.

\bibitem{Behandish2014b}
Morad Behandish and Horea~T. Ilie\c{s}.
\newblock On a generic shape complementarity score function.
\newblock {\em Fall Workshop on Computational Geometry (FWCG'2014)}, 2014.

\bibitem{Candas2005}
J.~E. Castrillon-Candas, Vinay Siddavanahalli, and Chandrajit~L. Bajaj.
\newblock Nonequispaced {F}ourier transforms for protein-protein docking.
\newblock Technical report, 2005.

\bibitem{Chen2003b}
Rong Chen, Li~Li, and Zhiping Weng.
\newblock {ZDOCK}: An initial-stage protein-docking algorithm.
\newblock {\em Proteins: Structure, Function, and Bioinformatics},
  52(1):80--87, 2003.

\bibitem{Cheok1998}
B.~T. Cheok and A.~Y.~C. Nee.
\newblock Trends and developments in the automation of design and manufacture
  of tools for metal stampings.
\newblock {\em Journal of Materials Processing Technology}, 75(1):240--252,
  1998.

\bibitem{Choi1997}
Hyeong~In Choi, Sung~Woo Choi, and Hwan~Pyo Moon.
\newblock Mathematical theory of medial axis transform.
\newblock {\em Pacific Journal of Mathematics}, 181(1):57--88, 1997.

\bibitem{Bajaj2013}
Rezaul Chowdhury, Muhibur Rasheed, Donald Keidel, Maysam Moussalem, Arthur
  Olson, Michel Sanner, and Chandrajit~L. Bajaj.
\newblock Protein-protein docking with {F2D}ock 2.0 and {GB}-rerank.
\newblock {\em PloS one}, 8(3):e51307, 2013.

\bibitem{Connolly1986}
Michael~L. Connolly.
\newblock Shape complementarity at the hemoglobin $\alpha$1$\beta$1 subunit
  interface.
\newblock {\em Biopolymers}, 25(7):1229--1247, 1986.

\bibitem{Cooley1965}
James~W. Cooley and John~W. Tukey.
\newblock An algorithm for the machine calculation of complex {F}ourier series.
\newblock {\em Mathematics of computation}, 19(90):297--301, 1965.

\bibitem{Duhovny2002}
Dina Duhovny, Ruth Nussinov, and Haim~J. Wolfson.
\newblock Efficient unbound docking of rigid molecules.
\newblock In {\em Algorithms in bioinformatics}, pages 185--200. Springer,
  2002.

\bibitem{Duncan1993}
Bruce~S. Duncan and Arthur~J. Olson.
\newblock Approximation and characterization of molecular surfaces.
\newblock {\em Biopolymers}, 33(2):219--229, 1993.

\bibitem{Eilers2002}
Markus Eilers, Ashish~B. Patel, Wei Liu, and Steven~O. Smith.
\newblock Comparison of helix interactions in membrane and soluble alpha-bundle
  proteins.
\newblock {\em Biophysical journal}, 82(5):2720, 2002.

\bibitem{Eisenstein2004}
Miriam Eisenstein and Ephraim Katchalski-Katzir.
\newblock On proteins, grids, correlations, and docking.
\newblock {\em Comptes rendus biologies}, 327(5):409--420, 2004.

\bibitem{Fischer1993}
Daniel Fischer, Raquel Norel, Ruth Nussinov, and Haim~J. Wolfson.
\newblock 3-d docking of protein molecules.
\newblock In {\em Combinatorial Pattern Matching}, pages 20--34. Springer,
  1993.

\bibitem{Gabb1997}
Henry~A. Gabb, Richard~M. Jackson, Michael J.~E. Sternberg, et~al.
\newblock Modelling protein docking using shape complementarity, electrostatics
  and biochemical information.
\newblock {\em Journal of Molecular Biology}, 272(1):106--120, 1997.

\bibitem{Goodman2008}
Russell~P. Goodman, Mike Heilemann, S{\"o}ren Doose, Christoph~M. Erben,
  Achillefs~N. Kapanidis, and Andrew~J. Turberfield.
\newblock Reconfigurable, braced, three-dimensional {DNA} nanostructures.
\newblock {\em Nature nanotechnology}, 3(2):93--96, 2008.

\bibitem{Han2000}
Jung-Hyun Han, Mike Pratt, and William~C. Regli.
\newblock Manufacturing feature recognition from solid models: a status report.
\newblock {\em Robotics and Automation, IEEE Transactions on}, 16(6):782--796,
  2000.

\bibitem{Hong2000}
Lin Hong, Xuenjun~C. Zhang, Jean~A. Hartsuck, and Jordan Tang.
\newblock Crystal structure of an in vivo hiv-1 protease mutant in complex with
  saquinavir: Insights into the mechanisms of drug resistance.
\newblock {\em Protein Science}, 9(10):1898--1904, 2000.

\bibitem{Jiang1991}
Fan Jiang and Sung-Hou Kim.
\newblock ``soft docking'': Matching of molecular surface cubes.
\newblock {\em Journal of Molecular Biology}, 219(1):79--102, 1991.

\bibitem{Johnson1939}
William~A. Johnson and Robert~F. Mehl.
\newblock Reaction kinetics in processes of nucleation and growth.
\newblock {\em Trans. Aime}, 135(8):396--415, 1939.

\bibitem{Joskowicz1999}
Leo Joskowicz and Elisha Sacks.
\newblock Computer-aided mechanical design using configuration spaces.
\newblock {\em Computing in Science \& Engineering}, 1(6):14--21, 1999.

\bibitem{Katchalski1992}
Ephraim Katchalski-Katzir, Isaac Shariv, Miriam Eisenstein, Asher~A Friesem,
  Claude Aflalo, and Ilya~A Vakser.
\newblock Molecular surface recognition: determination of geometric fit between
  proteins and their ligands by correlation techniques.
\newblock {\em Proceedings of the National Academy of Sciences},
  89(6):2195--2199, 1992.

\bibitem{Kendrew1958}
John~C. Kendrew, G.~Bodo, Howard~M. Dintzis, R.G. Parrish, H.~Wyckoff, and D.C.
  Phillips.
\newblock A three-dimensional model of the myoglobin molecule obtained by
  {X}-ray analysis.
\newblock {\em Nature}, 181(4610):662--666, 1958.

\bibitem{Koshland1995}
Daniel~E. Koshland.
\newblock The key--lock theory and the induced fit theory.
\newblock {\em Angewandte Chemie International Edition in English},
  33(23-24):2375--2378, 1995.

\bibitem{Kovacs2003}
Julio~A. Kovacs, Pablo Chacon, Yao Cong, Essam Metwally, and Willy Wriggers.
\newblock Fast rotational matching of rigid bodies by fast {F}ouriertransform
  acceleration of five degrees of freedom.
\newblock {\em Acta Crystallographica Section D: Biological Crystallography},
  59(8):1371--1376, 2003.

\bibitem{Kozakov2006}
Dima Kozakov, Ryan Brenke, Stephen~R Comeau, and Sandor Vajda.
\newblock {PIPER}: An {FFT}-based protein docking program with pairwise
  potentials.
\newblock {\em Proteins: Structure, Function, and Bioinformatics},
  65(2):392--406, 2006.

\bibitem{Kuriyan2012}
John Kuriyan, Boyana Konforti, and David Wemmer.
\newblock {\em The Molecules of Life: Physical and Chemical Principles}.
\newblock Garland Science, 2012.

\bibitem{Latombe1991}
Jean~Claude Latombe.
\newblock {\em Robot Motion Planning}.
\newblock Kluwer Academic Publishers, 1991.

\bibitem{Lawrence1993}
Michael~C. Lawrence and Peter~M. Colman.
\newblock Shape complementarity at protein/protein interfaces.
\newblock {\em Journal of Molecular Biology}, 234(4):946--950, 1993.

\bibitem{Lee1982}
Der-Tsai Lee.
\newblock Medial axis transformation of a planar shape.
\newblock {\em Pattern Analysis and Machine Intelligence, IEEE Transactions
  on}, (4):363--369, 1982.

\bibitem{Lenhof1997}
Hans-Peter Lenhof.
\newblock New contact measures for the protein docking problem.
\newblock In {\em Proceedings of the First Annual International Conference on
  Computational Molecular Biology}, pages 182--191. ACM, 1997.

\bibitem{Liedl2010}
Tim Liedl, Bj{\"o}rn H{\"o}gberg, Jessica Tytell, Donald~E Ingber, and
  William~M Shih.
\newblock Self-assembly of three-dimensional prestressed tensegrity structures
  from {DNA}.
\newblock {\em Nature Nanotechnology}, 5(7):520--524, 2010.

\bibitem{Lozano-Perez1983}
T.~Lozano-Perez.
\newblock Spatial planning: A configuration space approach.
\newblock {\em IEEE Transactions on Computers}, C-32(2):108--120, 1983.

\bibitem{Max1988}
Nelson~L Max and Elizabeth~D Getzoff.
\newblock Spherical harmonic molecular surfaces.
\newblock {\em Computer Graphics and Applications, IEEE}, 8(4):42--50, 1988.

\bibitem{Potts2001}
Daniel Potts, Gabriele Steidl, and Manfred Tasche.
\newblock Fast {F}ourier transforms for nonequispaced data: A tutorial.
\newblock In {\em Modern sampling theory}, pages 247--270. Springer, 2001.

\bibitem{Reid1998}
J.~Robert Reid, Victor~M. Bright, and J.T. Butler.
\newblock Automated assembly of flip-up micromirrors.
\newblock {\em Sensors and Actuators A: Physical}, 66:292--298, 1998.

\bibitem{Requicha1980}
Aristides~G. Requicha.
\newblock Representations for rigid solids: Theory, methods, and systems.
\newblock {\em Journal of ACM Computational Surveys}, 12(4):437--464, December
  1980.

\bibitem{Ritchie2008a}
D.~Ritchie.
\newblock Recent progress and future directions in protein-protein docking.
\newblock {\em Current Protein and Peptide Science}, 9(1):1--15, 2008.

\bibitem{Ritchie2000}
David~W. Ritchie and Graham J.~L. Kemp.
\newblock Protein docking using spherical polar {F}ourier correlations.
\newblock {\em Proteins: Structure, Function, and Bioinformatics},
  39(2):178--194, 2000.

\bibitem{Ritchie2008}
David~W. Ritchie, Dima Kozakov, and Sandor Vajda.
\newblock Accelerating and focusing protein-protein docking correlations using
  multi-dimensional rotational {FFT} generating functions.
\newblock {\em Bioinformatics}, 24(17):1865--1873, 2008.

\bibitem{Rodrigo2007}
Peter~John Rodrigo et~al.
\newblock {2D} optical manipulation and assembly of shape-complementary planar
  microstructures.
\newblock {\em Opt. Express}, 15(14):9009--9014, Jul 2007.

\bibitem{Shapiro2001}
V.~Shapiro.
\newblock {\em Handbook of Computer Aided Geometric Design}, chapter Solid
  Modeling, pages 473--518.
\newblock North Holland, 1st edition edition, 2001.

\bibitem{Shoichet1991}
Brian~K. Shoichet and Irwin~D. Kuntz.
\newblock Protein docking and complementarity.
\newblock {\em Journal of Molecular Biology}, 221(1):327--346, 1991.

\bibitem{Stewart1998}
M.~Stewart, H.~M. Kent, and A.~J. McCoy.
\newblock Structural basis for molecular recognition between nuclear transport
  factor 2 {(NTF2)} and the {GDP}-bound form of the {R}as-family {GTPase}
  {Ran}.
\newblock {\em Journal of Molecular Biology}, 277(3):635--646, 1998.

\bibitem{Tangelder2004}
Johan W.~H. Tangelder and Remco~C. Veltkamp.
\newblock A survey of content based {3D} shape retrieval methods.
\newblock In {\em Shape Modeling Applications, 2004. Proceedings}, pages
  145--156. IEEE, 2004.

\bibitem{Tilove1980a}
Robert~B. Tilove.
\newblock Set membership classification: A unified approach to geometric
  intersection problems.
\newblock {\em IEEE Transactions on Computers}, 100(10):874--883, 1980.

\bibitem{Torsello2004}
Andrea Torsello and Edwin~R. Hancock.
\newblock A skeletal measure of {2D} shape similarity.
\newblock {\em Computer Vision and Image Understanding}, 95(1):1 -- 29, 2004.

\bibitem{Wang2005a}
Charlie C.~L. Wang, Yu~Wang, and Matthew M.~F. Yuen.
\newblock Design automation for customized apparel products.
\newblock {\em Computer-Aided Design}, 37(7):675--691, 2005.

\bibitem{Wang1991}
Huajun Wang.
\newblock Grid-search molecular accessible surface algorithm for solving the
  protein docking problem.
\newblock {\em Journal of Computational Chemistry}, 12(6):746--750, 1991.

\bibitem{Wang2005}
Y.~Wang.
\newblock Coarse and reliable geometric alignment for protein docking y. wang,
  p. k. agarwal, p. brown, h. edelsbrunner, and j. rudolph pacific symposium on
  biocomputing 10: 64-75 (2005).
\newblock In {\em Pacific Symposium on Biocomputing}, volume~10, pages 64--75.
  Citeseer, 2005.

\bibitem{Woo2011}
Sungwook Woo and Paul~WK Rothemund.
\newblock Programmable molecular recognition based on the geometry of {DNA}
  nanostructures.
\newblock {\em Nature Chemistry}, 3(8):620--627, 2011.

\bibitem{Wu1998}
Y.~Wu, Y.~Rong, W.~Ma, and S.~R. LeClair.
\newblock Automated modular fixture planning: Geometric analysis.
\newblock {\em Robotics and Computer-Integrated Manufacturing}, 14(1):1 -- 15,
  1998.

\end{thebibliography}

\end{document}